\declaretheorem[numberwithin=section]{theorem}
\declaretheorem[numberlike=theorem]{lemma}
\declaretheorem[numberlike=theorem]{corollary}
\declaretheorem[numberlike=theorem]{definition}
\DeclareMathOperator{\Z}{\mathds Z}
\DeclareMathOperator{\R}{\mathds R}
\DeclareMathOperator{\N}{\mathds N}
\DeclareMathOperator{\argmax}{arg\ max}
\DeclareMathOperator{\argmin}{arg\ min}
\DeclareMathOperator{\Ot}{\tilde{\mathit{O}}} 
\DeclareMathOperator{\Omegat}{\tilde{\Omega}}
\DeclareMathOperator{\poly}{poly}
\let\P\relax
\DeclareMathOperator{\P}{\mathds{P}}
\DeclareMathOperator{\sign}{sign}
\DeclareMathOperator{\rank}{rank}
\DeclareMathOperator{\diag}{diag}
\begin{document}

\title{Minimum Cost Flow in the CONGEST Model}

\author{Tijn de Vos\thanks{University of Salzburg, Austria. This work is supported by the Austrian Science Fund (FWF): P 32863-N. This project has received funding from the European Research Council (ERC) under the European Union's Horizon 2020 research and innovation programme (grant agreement No~947702).}}

\date{}

\maketitle              

\begin{abstract}
We consider the CONGEST model on a network with $n$ nodes, $m$ edges, diameter $D$, and integer costs and capacities bounded by $\poly n$. In this paper, we show how to find an exact solution to the minimum cost flow problem in $n^{1/2+o(1)}(\sqrt{n}+D)$ rounds, improving the state of the art algorithm with running time $m^{3/7+o(1)}(\sqrt nD^{1/4}+D)$~\cite{ForsterGLPSY21}, which only holds for the special case of unit capacity graphs. For certain graphs, we achieve even better results. In particular, for planar graphs, expander graphs, $n^{o(1)}$-genus graphs, $n^{o(1)}$-treewidth graphs, and excluded-minor graphs our algorithm takes $n^{1/2+o(1)}D$ rounds.
We obtain this result by combining recent results on Laplacian solvers in the CONGEST model~\cite{ForsterGLPSY21,Anagnostides0HZ22} with a CONGEST implementation of the LP solver of Lee and Sidford~\cite{LS14}, and finally show that we can round the approximate solution to an exact solution. Our algorithm solves certain linear programs, that generalize minimum cost flow, up to additive error~$\epsilon$ in $n^{1/2+o(1)}(\sqrt{n}+D)\log^3 (1/\epsilon)$ rounds.
\end{abstract}

\section{Introduction}

The CONGEST model~\cite{Peleg00} is one of the most widely studied distributed models. It consists of a network of $n$ nodes that communicate in synchronous rounds, where each node can exchange a message of size $O(\log n)$ with each of its neighbors. 
The minimum cost flow problem is considered one of the harder problems in the CONGEST model. Although the highest lower bound is $\Omegat(\sqrt n+D)$, which is the same as for `easier' problems such as shortest path, minimum spanning trees, bipartiteness, $s$-$t$ connectivity~\cite{PelegR00,Elkin06,SarmaHKKNPPW12}, it was only recently that the first distributed algorithm was presented~\cite{ForsterGLPSY21}. For the approximate version there exists some further, also quite recent, results~\cite{GhaffariKKLP18,BeckerFKL21}. These results use the powerful \emph{Laplacian paradigm} to obtain their results. 

The Laplacian paradigm encompasses a series of algorithms that combine numerical and combinatorial techniques. The \emph{Laplacian matrix} of a weighted graph $G$ is defined as $L(G):={\rm{Deg}}(G)-A(G)$, where ${\rm{Deg}}(G)$ is the diagonal weighted degree matrix: ${\rm{Deg}}(G)_{uu}:= \sum_{(u,v)\in E} w(u,v)$ and ${\rm{Deg}}(G)_{uv}:=0$ for $u\neq v$, and $A(G)$ is the adjacency matrix: $A(G)_{uv}:=w(u,v)$. This line of research was initiated by Spielman and Teng~\cite{SpielmanT04}, who showed that linear equations in the Laplacian matrix of a graph can be solved in near-linear time. More efficient sequential and parallel Laplacian solvers have been presented since~\cite{KOSZ13,KoutisMP14,KoutisMP11,CohenKMPPRX14,KyngS16,PengS14,KyngLPSS16}. The Laplacian paradigm has booked many successes, including but not limited to flow problems~\cite{Madry13,Sherman13,KelnerLOS14,Madry16,Peng16,CohenMSV17,LiuS2020FasterDivergence,LiuS20,AxiotisMV20}, bipartite matching~\cite{BrandLNPSS0W20}, and (parallel) shortest paths~\cite{Li20,AndoniSZ20}.

Recently, these developments have also made their way to the distributed world~\cite{GhaffariKKLP18,BeckerFKL21,ForsterGLPSY21,Anagnostides0HZ22,ForsterV22}. In particular, Forster, Goranci, Liu, Peng, Sun, and Ye~\cite{ForsterGLPSY21} provide a Laplacian solver that takes $n^{o(1)}(\sqrt{n}+D)$ rounds, which is near-optimal: they provide a $\tilde \Omega(\sqrt{n}+D)$ lower bound. Furtermore, they show that their Laplacian solver leads to an implementation of (minimum cost) maximum flow algorithms~\cite{Madry16,CohenMSV17} in the CONGEST model. In this paper, we significantly improve the round complexity of the algorithms solving the exact variants of these flow problems.

\subsection{Our Results}
Our main result is an algorithm that solves the minimum cost flow problem, so in particular also the maximum flow problem. 

\begin{restatable}{theorem}{thmflow}\label{thm:flow}
    There exists an algorithm that, given a directed graph $G=(V,E,w)$ with integer costs $q\in \Z_{>0}^m$ and capacities $c\in \Z_{>0}^m$ satisfying $||q||_\infty,||c||_\infty \leq M$, computes a minimum cost maximum $s$-$t$ flow in $\tilde O(\sqrt{n}T_{\rm{Laplacian}}(G)\log^3 M)$ rounds in the CONGEST model, where $T_{\rm{Laplacian}}(G)$ is the number of rounds needed to solve a Laplacian system on $G$.  
\end{restatable}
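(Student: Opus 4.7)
The plan is to reduce minimum cost maximum flow to a linear program of the form handled by the Lee--Sidford interior point method, to implement each iteration of that method in CONGEST using the assumed Laplacian solver as a black box, and finally to round the near-optimal fractional solution to the exact integral optimum.

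First I would formulate the problem as an LP on flow vectors $f\in\R^m$ with incidence-matrix constraint $Bf = Fe_{st}$, box constraint $0\le f\le c$, and objective $q^\top f$; the maximum-flow value $F$ is handled either by the standard single auxiliary arc, by binary search, or (more cleanly) by adding the $s$-$t$ demand as another variable so that min cost and max value are optimized jointly, as in \cite{LS14}. All operations required per iteration of the weighted path-following scheme of Lee and Sidford are (i) matrix--vector products against $B$, $B^\top$, and diagonal reweighings, (ii) approximate leverage scores of a weighted incidence matrix, and (iii) a constant number of Laplacian-like solves of the form $B^\top D B x = y$. The first is purely local on the edge/node data layout; the third is exactly $T_{\rm{Laplacian}}(G)$ rounds by hypothesis; the second is realized by Johnson--Lindenstrauss sketching on top of Laplacian solves. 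I would package these as the subroutines \texttt{ComputeLeverageScores}, \texttt{ComputeApxWeights}, \texttt{CenteringInexact}, \texttt{ProjectMixedBall}, and \texttt{PathFollowing} hinted at in the preamble, with the outer driver \texttt{LPSolve}.

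The main technical obstacle will be carrying the Lee--Sidford analysis through with all the \emph{inexactness} that the CONGEST model forces: approximate leverage-score weights, approximate Laplacian solves returning vectors close only in the appropriate matrix norm, approximate mixed $\ell_2/\ell_\infty$ projections, and per-message precision truncated to $O(\log n)$ bits. I would fix a working precision of $1/\poly(n,M)$ throughout, verify that each of the above approximations can be done to that precision in $\polylog(n)\cdot T_{\rm{Laplacian}}(G)$ rounds per iteration, and check that the invariants of \cite{LS14} are preserved under these perturbations (this is where the extra $\log^2(1/\epsilon)$ factors beyond the $\log(1/\epsilon)$ iteration count come in). Multiplying the $\tilde O(\sqrt{n}\log(1/\epsilon))$ outer iterations by $T_{\rm{Laplacian}}(G)$ rounds per iteration, and choosing $\epsilon = 1/\poly(n,M)$ so $\log(1/\epsilon)=O(\log(nM))$, yields the target bound $\tilde O(\sqrt{n}\,T_{\rm{Laplacian}}(G)\log^3 M)$.

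Finally, I would show how an $\epsilon$-approximate fractional minimum cost flow is rounded to an exact integral one in CONGEST within the same budget. Because $q,c\in\Z_{>0}^m$ with $\|q\|_\infty,\|c\|_\infty\le M$, the vertices of the feasible polytope have coordinates with denominator polynomially bounded in $n$ and $M$, so any feasible $\tilde f$ whose cost is within $1/\poly(n,M)$ of the optimum already identifies the correct face. I would then repair $\tilde f$ to an integral optimum by a distributed cycle-canceling / residual-graph correction, using standard global aggregation on a BFS tree (built in $O(D)$ rounds) to detect and cancel the small number of residual negative cycles and to integralize the resulting flow. The cost of this post-processing is dominated by the main interior-point loop, completing the proof of Theorem~\ref{thm:flow}.
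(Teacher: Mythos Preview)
Your overall strategy---formulate as an LP with incidence-matrix constraints, run Lee--Sidford in CONGEST with each iteration reduced to Laplacian solves, JL-based leverage scores, and a mixed-norm projection---matches the paper almost exactly, and the accounting for the $\tilde O(\sqrt{n}\log^3(U/\epsilon)\,T_{\rm{Laplacian}}(G))$ bound is the same.

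The real gap is in your final rounding step. You propose to repair the $\epsilon$-approximate flow by ``distributed cycle-canceling / residual-graph correction'' and assert this is dominated by the IPM loop. But canceling residual negative cycles is precisely negative-weight shortest paths, which in CONGEST currently costs $\tilde O(\sqrt{n}D^{1/4}+D)$ per call and is \emph{not} bounded by $T_{\rm{Laplacian}}(G)$; indeed the paper explicitly points out that the prior work of Forster et al.\ loses both the round complexity and the universal-optimality dependence on $T_{\rm{Laplacian}}(G)$ exactly because it rounds via shortest-path computations. Moreover, your claim that a $1/\poly(n,M)$-near-optimal solution ``identifies the correct face'' fails when the optimum is not unique, so the number of repair steps is not obviously small. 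The paper sidesteps all of this: it uses the Daitch--Spielman LP with slack variables $y,z$ and an \emph{isolation-lemma} random perturbation of the cost vector so that, with probability $\tfrac12$, the LP optimum is a unique integral point; it then solves to additive error $\epsilon=1/\poly(|E|,M)$, rescales by $(1-\epsilon)$, and rounds each coordinate to the nearest integer purely locally. You should replace the cycle-canceling step by this perturb-solve-round scheme (and verify that the resulting $A^TWA$ is SDD, hence reducible to two Laplacian solves via Gremban's reduction).
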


We know that $T_{\rm{Laplacian}}(G)= n^{o(1)}(\sqrt{n}+D)$ for general graphs~\cite{ForsterGLPSY21}, which is near-optimal. However, for certain graphs we can get better results. This is based on the concept of \emph{universally optimal} algorithms, which takes the topology of the input graph into account. The details regarding this can be found in \autoref{sc:prelim}. In particular, we have $T_{\rm{Laplacian}}(G)= n^{o(1)}D$ for planar graphs, expander graphs, $n^{o(1)}$-genus graphs, $n^{o(1)}$-treewidth graphs, and excluded-minor graphs.

Further we remark that Cohen, M\k{a}dry, Sankowski, and Vladu~\cite{CohenMSV17} show that the negative weight single source shortest path problem can be reduced to minimum cost flow and a non-negative weight shortest path computation. Using~\cite{ChechikM22} for the latter in $\tilde 
O(\sqrt{n}D^{1/4}+D)$ rounds, we obtain the following corollary. 
\begin{corollary}
    There exists an algorithm that, given a directed graph $G=(V,E,w)$ with integer weights $w\in \Z^m$  satisfying $||w||_\infty \leq M$, and source $s\in V$, computes shortest paths from $s$ in $\tilde O(\sqrt{n}T_{\rm{Laplacian}}(G)\log^3 M)$ rounds in the CONGEST model, where $T_{\rm{Laplacian}}(G)$ is the number of rounds needed to solve a Laplacian system on $G$. 
\end{corollary}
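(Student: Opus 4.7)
The plan is to combine the reduction of Cohen, Mądry, Sankowski, and Vladu~\cite{CohenMSV17}, which turns negative-weight single-source shortest paths into a (poly)logarithmic sequence of minimum cost flow computations plus one non-negative-weight SSSP call, with the CONGEST ingredients now available: Theorem~\ref{thm:flow} for the flow subroutines and Chechik-Mukhtar~\cite{ChechikM22} for the Dijkstra-style step at the end.

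Concretely, I would first implement the CMSV17 reduction in CONGEST. Each min-cost flow subproblem lives on (essentially) the same graph $G$ with costs and capacities polynomially bounded, so Theorem~\ref{thm:flow} applies as a black box at cost $\tilde O(\sqrt n\, T_{\rm{Laplacian}}(G)\log^3 M)$ rounds per call, and there are only polylogarithmically many calls. The reduction then uses the optimal dual potentials $\phi$ of the final flow LP to reweight edges via $w'(u,v) := w(u,v) + \phi(u) - \phi(v)$, which by LP optimality (reduced-cost non-negativity) yields a non-negatively weighted graph whose shortest-path structure from $s$ is the same as in $G$.

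I would output these potentials in the natural distributed format (each $v$ stores $\phi(v)$), which is how LP-based flow solvers already produce them; reweighting is then a one-round operation where each endpoint of an edge sends $\phi$ to its neighbor. On the reweighted graph I would invoke Chechik-Mukhtar in $\tilde O(\sqrt n D^{1/4}+D)$ rounds to obtain non-negative shortest-path distances $d'(s,v)$, and finally undo the reweighting locally by setting $d(s,v) := d'(s,v) - \phi(s) + \phi(v)$, where $\phi(s)$ is broadcast in $O(D)$ rounds along a BFS tree. Using the lower bound $T_{\rm{Laplacian}}(G) = \Omegat(\sqrt n + D)$, the flow cost $\sqrt n\, T_{\rm{Laplacian}}(G) \geq \Omegat(n + \sqrt n D)$ dominates the Chechik-Mukhtar term, giving the stated bound.

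The main obstacle I anticipate is bookkeeping rather than algorithmic: tracing the CMSV17 reduction to confirm that all intermediate costs and capacities remain bounded by $\poly(n,M)$ (so $\log^3 M$ is preserved), that the graph modifications it performs either keep the vertex set unchanged or add only a constant number of global auxiliary vertices simulable via a BFS tree, and that the number of flow invocations is indeed only polylogarithmic. These are routine verifications, and once completed the corollary follows immediately by plugging Theorem~\ref{thm:flow} and~\cite{ChechikM22} into the reduction.
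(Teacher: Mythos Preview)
Your approach is essentially identical to the paper's, which also simply invokes the CMSV17 reduction together with Theorem~\ref{thm:flow} and \cite{ChechikM22}. One small correction: the universal lower bound is $T_{\rm{Laplacian}}(G)=\Omegat(SQ(G))=\Omegat(D)$ (from \cite{Anagnostides0HZ22}), not $\Omegat(\sqrt n+D)$---the latter is only existential, and indeed the paper emphasizes graph families where $T_{\rm{Laplacian}}(G)=n^{o(1)}D\ll \sqrt n$; fortunately $\Omegat(D)$ alone already suffices to show that $\sqrt n\,T_{\rm{Laplacian}}(G)=\Omegat(\sqrt n D)$ dominates the $\tilde O(\sqrt n D^{1/4}+D)$ cost of the non-negative SSSP step.
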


We obtain \autoref{thm:flow} by writing the problem as an LP, solving this LP up to high precision and rounding the result. Hereto we present an LP solver for certain linear programs in the CONGEST model. 

Formally, the setting is as follows. Let $A\in \R^{m\times n}$, $b\in\R^n$, $c\in\R^m$, $l_i\in \R\cup\{-\infty\}$, and $u_i\in \R\cup\{+\infty\}$ for all $i\in[m]$, where we assume $l_i\neq-\infty$ or $u_i\neq +\infty$. The linear program we want to solve is as follows
\begin{align*}
{\rm{OPT}}:= \min_{\substack{x\in\R^m : A^Tx=b \\ \forall i\in[m] : l_i\leq x_i\leq u_i}} c^Tx.
\end{align*}
We assume that the set of feasible solutions to the LP $\Omega^{\mathrm{o}}:=\{x\in\R^m : A^Tx=b,\ l_i\leq x_i\leq u_i\}$ is non-empty. 

\begin{restatable}{theorem}{thmLP}\label{thm:LP}
Let $A \in \R^{m\times n}$ be a constraint matrix with $\rank(A)=n$, let $b\in \R^n$ be a demand vector, and let $c\in \R^m$ be a cost vector. Moreover, let $x_0 \in \Omega^{\mathrm{o}}$ be a given initial point. Suppose a CONGEST network consists of $n$ nodes, where each node $i$ knows both every entire $j$-th row of $A$ for which $A_{ji}\neq 0$ and knows $(x_0)_j$ if $A_{ji}\neq 0$. Moreover, suppose that for every $y\in \R^n$ and positive diagonal $W\in\R^{m\times m}$ we can compute $(A^TWA)^{-1}y$ up to precision $\poly(1/m)$ in $T_{\rm{Laplacian}}(G)$ rounds. 
Let $U:=\max\{||1/(u-x_0)||_\infty,||1/(x_0-l)||_\infty,||u-l||_\infty,||c||_\infty\}$. Then with high probability the CONGEST algorithm \LPSolve outputs a vector $x\in \Omega^{\mathrm{o}}$ with $c^Tx \leq {\rm{OPT}} + \epsilon$ in $\Ot(\sqrt{n}\log^3(U/\epsilon)T_{\rm{Laplacian}}(G))$ rounds.
\end{restatable}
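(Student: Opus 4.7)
My plan is to port the weighted-path-following interior point method of Lee and Sidford~\cite{LS14} to the CONGEST model. That algorithm reaches an $\epsilon$-approximate optimum in $\Ot(\sqrt{\rank(A)}\log(U/\epsilon))$ outer iterations by using a log-barrier whose weights track the leverage scores of the rescaled constraint matrix, so that since $\rank(A)=n$ one obtains the desired $\sqrt{n}$ factor. Each iteration is dominated by a constant number of solves of linear systems in $A^TWA$ for a positive diagonal $W$, plus matrix-vector products with $A$ and $A^T$ and a few inner products; the task is therefore to execute each of these primitives in CONGEST at the cost stated in the theorem.

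\textbf{Distributed primitives.} First I will fix the data layout: for $y\in\R^n$ node $i$ stores $y_i$, and for $x\in\R^m$ every node $i$ with $A_{ji}\neq 0$ stores the shared value $x_j$, consistent with the input assumption on $x_0$. Under this layout $A^Tx\in\R^n$ is computable with zero communication, while each entry $(Ay)_j$ is assembled by the few nodes owning row $j$ exchanging their coordinates of $y$, costing $O(1)$ rounds per row in the sparse flow-like setting. Inner products and $\ell_\infty$-norms aggregate over a BFS tree in $O(D)$ rounds. The only non-trivial primitive is the application of $(A^TWA)^{-1}$, which by hypothesis costs $T_{\rm{Laplacian}}(G)$ rounds.

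\textbf{The iteration.} The centering step \CenteringInexact iterates a weighted projected Newton update via \ProjectMixedBall; each such update reduces to computing $A(A^TWA)^{-1}A^Tz$, i.e.\ one Laplacian solve sandwiched between matrix-vector products. The approximate leverage scores needed by \ComputeApxWeights are obtained via a standard Johnson-Lindenstrauss sketch of this projection matrix, which amounts to $O(\log n)$ independent applications of the same projection. Thus one outer iteration of \PathFollowing runs in $\Ot(T_{\rm{Laplacian}}(G))$ rounds. \ComputeInitialWeights turns the arbitrary feasible $x_0$ into an approximately central starting point via a standard big-$M$ / slack-variable reformulation, paying at most an extra $\log U$ in the iteration count.

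\textbf{Main obstacle and wrap-up.} The most delicate part will be propagating the $\poly(1/m)$ relative error of the Laplacian solver through $\Ot(\sqrt{n}\log(U/\epsilon))$ Newton steps without destroying the potential-function analysis. Lee and Sidford already state their guarantees for inexact solves with error measured in the $A^TWA$-norm, so it suffices to verify that the CONGEST Laplacian solver achieves that norm via a standard spectral-to-weighted conversion and to check that the one-step decrease of the centrality potential absorbs the $\poly(1/m)$ slack. Multiplying the iteration count by the per-iteration cost $\Ot(T_{\rm{Laplacian}}(G))$ and absorbing two further $\log(U/\epsilon)$ factors coming from the precision required inside the leverage-score sketch and inside each Laplacian solve yields the stated $\Ot(\sqrt{n}\, T_{\rm{Laplacian}}(G)\log^3(U/\epsilon))$ bound, at which point \LPSolve returns the current iterate with $c^Tx\le {\rm{OPT}}+\epsilon$.
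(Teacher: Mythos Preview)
Your high-level plan matches the paper: port Lee--Sidford's weighted path-following method, implement each iteration via matrix-vector products, Laplacian solves, and leverage-score sketches. However, there is a genuine gap.

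You write that the update in \CenteringInexact ``via \ProjectMixedBall\ \ldots\ reduces to computing $A(A^TWA)^{-1}A^Tz$, i.e.\ one Laplacian solve sandwiched between matrix-vector products.'' That is not what \ProjectMixedBall\ does. The projection $P_{x,w}$ used for the Newton step on $x$ is indeed a Laplacian solve, but \ProjectMixedBall\ is a \emph{separate} subroutine used to update the \emph{weights}: given $a,l\in\R^m$ it must compute $\argmax_{\|x\|_2+\|l^{-1}x\|_\infty\le 1} a^Tx$. This is not a linear system and does not reduce to a Laplacian solve. The paper devotes an entire section to implementing it in CONGEST: one reparameterizes by the $\ell_\infty$ budget $t$, observes that the optimum for fixed $t$ depends only on the rank of each coordinate in the ordering by $|a_i|/l_i$, and then performs a binary search over that ordering, each step aggregating three partial sums over a BFS tree. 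This costs $\Ot(D\log^2(U/\epsilon))$ rounds and is in fact the dominant term in each call to \CenteringInexact. Without this piece your per-iteration bound is unjustified.

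Relatedly, your accounting for the $\log^3(U/\epsilon)$ is off. One factor comes from the $\Ot(\sqrt{n}\log(U/\epsilon))$ outer iterations of \PathFollowing, but the other two come precisely from \ProjectMixedBall, not from the leverage-score sketch or the Laplacian-solve precision: inside \CenteringInexact, \ComputeApxWeights\ is called with constant target accuracy, so it contributes only $\Ot(T_{\rm Laplacian}(G))$. Finally, your description of \ComputeInitialWeights\ as a ``big-$M$/slack-variable reformulation'' to center $x_0$ is incorrect; that routine computes initial regularized Lewis \emph{weights} by homotopy in $p$ from $p=2$ down to $p_{\rm target}$, while centering of the iterate is handled by the first call to \PathFollowing\ with direction $d=-w\phi'(x_0)$.
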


Intuitively, $U$ is a bound on the size of any constants or variables appearing in the LP. For graph problems, this is usually bounded by a polynomial in $n$ and $M$.

The formal statement of this theorem might seem somewhat convoluted; essentially it means that we can solve linear programs whose constraint matrix can be expressed in terms of the adjacency matrix, where each node knows the entries in the constraint matrix corresponding to its incident edges. Analogously, a node has to output the variables corresponding to its incident edges. 
This includes flow problems, see~\autoref{app:flow}. It also includes approximate fractional maximal matching. However, here the running time does not come close to the $O(\log(nM)/\epsilon^2)$ running time of Ahmadi, Kuhn, and Oshman~\cite{AhmadiKO18} (at least for $\epsilon=\Omega(1/n^{1/4})$).

\subsection{Related Work}
\paragraph{Distributed Flow Algorithms}
Our main point of reference is Forster, Goranci, Liu, Peng, Sun, and Ye~\cite{ForsterGLPSY21}. They provide the previous best minimum cost flow solver, which takes $m^{3/7+o(1)}(\sqrt n D^{1/4}+D)$ rounds\footnote{For simplicity, we restrict ourselves to graphs with weights bounded by $\poly(m)$ when discussing related work.}. Their approach uses the framework from Cohen, M\k{a}dry, Sankowski, and Vladu~\cite{CohenMSV17}, which uses $\tilde O(m^{3/7})$ iterations of another interior point method to solve a (different) LP representing the problem. We bring this number of iterations down to $\tilde O(n^{1/2})$. Moreover, this approach leads to an approximate solution that has to be made into an exact solution by running $\tilde O(m^{3/7})$ shortest path computations. Currently, the state of the art for algorithm for shortest path computations takes $\Ot(\sqrt n D^{1/4}+D)$ rounds~\cite{ChechikM22}, which is already (slightly) worse than the global (near optimal) round complexity for solving Laplacian systems. Moreover, this means that their set-up cannot benefit from the recent progress of (almost) universally optimal Laplacian solvers. Our approach solves the LP up to a higher precision, such that an internal rounding procedure gives the exact solution, and no further shortest path computations are necessary. 
A further improvement is that \cite{ForsterGLPSY21} only solves minimum cost flow in graphs with \emph{unit capacities}, where we solve it for arbitrary capacities. Further, \cite{ForsterGLPSY21} provides a maximum flow algorithm for graphs with arbitrary capacities, which takes $\tilde O(m^{3/7}U^{1/7}(n^{o(1)}(\sqrt n+D)+\sqrt n D^{1/4})+\sqrt m)$ rounds. This is the previous best result for maximum flow in the CONGEST model. 

For approximate versions, there exist some further results that only hold for undirected graphs. 
Ghaffari et al.~\cite{GhaffariKKLP18} give a $(1+\epsilon)$-approximate maximum flow in weighted undirected graphs in $n^{o(1)}(\sqrt n+D)/\epsilon^3$ rounds. Further, Becker et al.~\cite{BeckerFKL21} gave a $(1+\epsilon)$-approximation to unit capacity minimum cost flow in undirected graphs in $\Ot(n/\epsilon^2)$ rounds.

\paragraph{Interior Point Methods for Flow Problems}
The line of work giving solutions for flow problems through interior point methods is initiated by Daitch and Spielman~\cite{DS08}, who leverage the Laplacian solver of Spielman and Teng~\cite{SpielmanT04} in an $\Ot(m^{3/2})$ time algorithm. The most recent development is the near-linear time algorithm of Chen et al.~\cite{ChenKLPPS22}. However, their algorithm uses $\Omega(m)$ iterations, which seems to render it hard to implement it efficiently in a distributed setting, as any intuitive implementation uses at least one round per iteration. The algorithms with lowest iteration counts have either $\Theta(m^{3/7})$ iterations~\cite{Madry16,CohenMSV17}, or $\Theta(\sqrt{n})$ iterations\cite{LS14}. In our work, we show how to implement the latter efficiently in the CONGEST model.

\paragraph{Distributed Laplacian Solvers and Shortcut Quality} \label{sc:prelim}
Forster et al.~\cite{ForsterGLPSY21} provide a CONGEST model algorithm with $T_{\rm{Laplacian}}(G)= n^{o(1)}(\sqrt{n}+D)$, and show that this is existentially optimal. For any graph, we know that $T_{\rm{Laplacian}}(G)=\Omega(D)$, however it turns out that the $\sqrt{n}$-term is not necessary for every instance. To make this precise, we define the \emph{shortcut quality} of a graph, as introduced by Ghaffari and Haeupler~\cite{GhaffariH16}. Intuitively, the shortcut quality tells us how easy it is, given some partition of the nodes, to compute some simple function (e.g., a minimum over the values held by nodes) on each part separately. Since distributed algorithm design often has such functions at its core, the shortcut quality can be used both for better upper and lower bounds. 

\begin{definition}
    Let $G=(V,E)$ be an undirected graph whose node set $V$ is partitioned into $k$ disjoint subsets $V= P_1 \cupdot P_2 \cupdot \cdots \cupdot P_k$, such that each induced subgraph $G[P_i]$ is connected. A collection of $k$ subgraphs $H_1, \cdots, H_k$ is called a \emph{shortcut} of $G$ with congestion $c$ and dilation $d$ if
    \begin{enumerate}
        \item the (hop) diameter of $G[P_i]\cup H_i$ is at most $d$;
        \item every edge is included in at most $c$ graphs $H_i$.
    \end{enumerate}
    The \emph{quality} of the shortcut is defined as $c+d$. 
The \emph{shortcut quality} of $G$, denoted by $SQ(G)$, is defined as the smallest shortcut quality of the worst-case partition of $V$ into connected parts. 
\end{definition}

Anagnostides et al.~\cite{Anagnostides0HZ22} provide efficient algorithms for Laplacian solving in terms of the shortcut quality. Moreover, they provide an $\Omegat(SQ(G))$ lower bound. 

\begin{theorem}[\cite{Anagnostides0HZ22}]
    There exists a Laplacian solver with error $\epsilon>0$ in the CONGEST model that, given a graph $G$, takes $n^{o(1)}\poly(SQ(G))\log(1/\epsilon)$ rounds. In graphs with minor density $\delta$ and hop-diameter $D$, it takes $n^{o(1)}\delta D\log(1/\epsilon)$. 
\end{theorem}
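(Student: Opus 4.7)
The plan is to follow the by-now standard ``Laplacian-solver-as-recursive-preconditioner'' paradigm (à la Peng--Spielman, Kyng--Sachdeva, and Forster et al.\ in the CONGEST setting) and then open up each primitive it uses, bounding its round complexity in terms of $SQ(G)$ instead of the generic $\sqrt n + D$ bound. Concretely, I would first reduce, via preconditioned Richardson/Chebyshev iteration with a standard condition-number analysis, solving $Lx=b$ to $O(\log(1/\epsilon))$ applications of a preconditioner $Z^{-1}$ with $Z \preceq L \preceq \kappa Z$ for some small $\kappa$, plus the same number of matrix--vector multiplications with $L$. The preconditioner itself is built recursively: one constructs a sparsifier of $L$, eliminates a constant fraction of ``easy'' (low-degree, low-leverage-score) vertices via a Schur-complement step, and recurses on the resulting smaller system. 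With the Kyng--Sachdeva-style sparsification this yields $n^{o(1)}$ recursion depth and only $n^{o(1)}$ solves per level, so the total work outside of the CONGEST primitives is $n^{o(1)}\log(1/\epsilon)$ ``rounds of distributed primitives''.

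The key step is then to show that each primitive runs in $\poly(SQ(G))$ rounds. The three recurring primitives are (i) matrix--vector multiplication with the current Laplacian-like matrix, (ii) spectral sparsifier construction (via, e.g., sampling by effective resistances or by a weighted $\ell_2$ sampler), and (iii) partial Cholesky / Schur-complement elimination of an independent set of ``cheap'' vertices. Each of these is a \emph{part-wise aggregation} task once one views the partition induced by how the algorithm groups vertices into eliminated clusters: inside each part $P_i$ only a bounded amount of information needs to be summed/propagated, and between parts only boundary data is exchanged. By the definition of $SQ(G)$, a shortcut of quality $SQ(G)$ lets us perform any such aggregation in $\tilde O(SQ(G))$ rounds, and doing it $n^{o(1)}$ times over the recursion gives the $n^{o(1)}\poly(SQ(G))\log(1/\epsilon)$ bound.

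For the minor-density / hop-diameter refinement, I would invoke the Haeupler--Li--Nowicki--Parter-style result that graphs with minor density $\delta$ admit near-optimal low-congestion shortcuts of quality $\tilde O(\delta D)$ that can moreover be \emph{constructed} in roughly the same number of rounds. Plugging this shortcut construction into the same framework replaces $SQ(G)$ by $\tilde O(\delta D)$ throughout and yields the stated $n^{o(1)}\delta D \log(1/\epsilon)$ bound.

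The main obstacle I foresee is the joint handling of the two ``$n^{o(1)}$'' factors: the recursion depth of the solver and the sub-polynomial slack in the shortcut construction/sparsification must be orchestrated so that they compose into a single $n^{o(1)}$ overhead rather than a genuine polynomial blowup. In particular, the Schur-complement step produces weighted graphs that are only \emph{close} to minors of $G$, so one has to argue that the effective partitions along which we aggregate still admit shortcuts of essentially the same quality as those of $G$ itself --- i.e., that shortcut quality is preserved (up to $n^{o(1)}$) under the vertex-elimination and sparsification operations used inside the solver. Once this stability of $SQ$ under the recursive reductions is established, the two bounds in the theorem follow from the generic framework above.
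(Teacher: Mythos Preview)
The paper does not prove this theorem at all: it is stated as a black-box result with the citation \cite{Anagnostides0HZ22} attached to the theorem header, and no proof (or even proof sketch) is given anywhere in the paper. The theorem is simply quoted from Anagnostides, Haeupler, Ishai, and Zuzic and used as an input to the round-complexity bounds via the quantity $T_{\rm{Laplacian}}(G)$. So there is nothing in this paper to compare your proposal against.

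Your sketch is a plausible high-level outline of the kind of argument that appears in the cited work (recursive preconditioning, sparsification and partial elimination implemented via part-wise aggregation, with shortcut quality governing the per-primitive cost, and the minor-density refinement coming from known shortcut constructions for minor-closed families). But for the purposes of this paper, the appropriate ``proof'' is simply the citation; any attempt to reprove the result here goes beyond what the paper does or needs.
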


Note that on graphs with minor density $n^{o(1)}$ the algorithm takes $n^{o(1)}D\log(1/\epsilon)$ rounds, matching the lower bound up to $n^{o(1)}$ factors. This includes planar graphs, $n^{o(1)}$-genus graphs, $n^{o(1)}$-treewidth graphs, and excluded-minor graphs. 

Further note that in particular on graphs $G$ with $SQ(G)=n^{o(1)}$ the algorithm takes $n^{o(1)}\log(1/\epsilon)$ rounds. This includes expanders, hop-constrained expanders, and the classes mentioned above with restricted diameter $D=n^{o(1)}$.


\section{Overview and Techniques}

\subsection{LP Solver}
For our LP solver, we give an implementation of Lee and Sidford's~\cite{LS14} LP solver in the CONGEST model. For correctness, we refer to~\cite{LS14}. In a similar fashion, Forster and de Vos~\cite{ForsterV22} gave an implementation of this algorithm in the Broadcast Congested Clique. 
In this distributed model, each round each node can send \emph{the same} $O(\log n)$-bit message to every other node in the network. This is in contrast to the CONGEST model, where nodes can send different messages, but only to its neighbors. Forster and de Vos essentially show that this LP framework uses $\sqrt n$ iterations, where each iteration involves:
\begin{enumerate}[(1)]
    \item Matrix-vector multiplication involving some matrix with entries corresponding to edges;
    \item Approximately solving a Laplacian system;\label{step:Laplacian_solve}
    \item Computing leverage scores; \label{step:comp_lev_scores}
    \item Projecting on a mixed norm ball.
\end{enumerate}

For this section (and the formulation of \autoref{thm:LP}), we assume that matrix-vector multiplication can be done efficiently. In practice, this means that we have to be able to write the constraint matrix in terms of the adjacency matrix. In other words, there should only be constraints that correspond to edges. 

For \ref{step:Laplacian_solve}, we can use \cite{ForsterGLPSY21,Anagnostides0HZ22}, as mentioned in \autoref{sc:prelim}. 

Concerning \ref{step:comp_lev_scores}; the leverage score of a matrix $M$ is defined by $\sigma(M) := \diag(M(M^TM)^{-1}M^T)$, where $\diag(\cdot)$ returns the diagonal vector. We  remark (similar to \cite{LS14,ForsterV22}) that, using the Johnson-Lindenstrauss Lemma, we can compute a sufficient approximation by some local sampling and a small number of matrix-vector multiplications and Laplacian solves. Details can be found in \autoref{app:LP}.

It remains to show that we can `project on a mixed norm ball'. The objective here is to project a vector $a\in \R^m$ onto a ball of mixed norm. In particular, given $l\in \R^m$, we consider the ball of mixed norm 1: $\mathcal B := \{x: ||x||_2+||l^{-1}x||_\infty\leq 1\}$. Now we need to compute $x\in \mathcal B$ closest to $a$, more formally we need to compute
\begin{align*}
\argmax_{||x||_2+||l^{-1}x||_\infty\leq 1} a^Tx.
\end{align*}
We do this by borrowing ideas from~\cite{LS14,ForsterV22}. Details can be found in \autoref{app:LP}.

\subsection{Minimum Cost Flow}
Let $G=(V,E)$ be a directed graph, with integer capacities $c\in \Z_{\geq 0}^m$, integer costs $q\in \Z_{\geq 0}^m$, and source and target nodes $s$ and $t$ respectively. The \emph{minimum cost (maximum) flow} problem is to find an $s$-$t$ flow of minimum cost, among all such flows of maximum value. More formally, we say that $f\in \R_{\geq 0}^E$ is a \emph{$s$-$t$ flow} if $f_e\leq c_e$ for all $e\in E$ and $ \sum_{e\in E : v\in e}f_e =0$. The \emph{value} of the flow is $\sum_{v\in V : (s,v)\in E}f_{(s,v)}$. The \emph{maximum flow}, is the flow of maximum value and the \emph{minimum cost (maximum) flow} is the flow of minimum cost $\sum_{e\in E}f_eq_e$ among all flows of maximum value. 

The minimum cost flow problem has a natural corresponding linear program. However, the state-of-the-art LP solvers only provide an approximate solution. To turn this efficiently into an exact solution, we do not consider the textbook LP formulation, but one that is closely related. After solving this up to high precision (additive error $\epsilon = O(1/\poly(m))$), we use the well-known fact that a minimum cost flow problem with integer input admits an optimal solution with integer values~\cite{KleinbergT06}, and we (internally) round the approximate fractional solution to an optimal integer solution. 

The technical contribution is to show that this particular LP formulation satisfies the demands of \autoref{thm:LP}. This is done in \autoref{app:flow}.
 
\section{A Distributed LP Solver}\label{app:LP}

In this section, we present our algorithm to solve a linear program, given a Laplacian solver. 
First, we reiterate the formal description of the problem. Let $A\in \R^{m\times n}$, $b\in\R^n$, $c\in\R^m$, $l_i\in \R\cup\{-\infty\}$, and $u_i\in \R\cup\{+\infty\}$ for all $i\in[m]$, where we assume $l_i\neq-\infty$ or $u_i\neq +\infty$. The linear program we try to solve is as follows
\begin{align*}
{\rm{OPT}}:= \min_{\substack{x\in\R^m : A^Tx=b \\ \forall i\in[m] : l_i\leq x_i\leq u_i}} c^Tx.
\end{align*}
We assume that the set of feasible solutions to the LP $\Omega^{\mathrm{o}}:=\{x\in\R^m : A^Tx=b,\ l_i\leq x_i\leq u_i\}$ is non-empty.

\thmLP*

The algorithm we provide in this section is an implementation of Lee and Sidford's~\cite{LS14} in the CONGEST model. We refer to them for the proof of correctness. For the two subroutines that we change, computing leverage scores and projecting on a mixed norm ball, we provide a correctness analysis. The remainder of this section consists of presenting the algorithm and proving the bound on the number of rounds. In both we follow notation of Forster and de Vos~\cite{ForsterV22}, who provide the equivalent for the Broadcast Congested Clique. 

Lee and Sidford~\cite{LS19}\footnote{In this section, we refer to the arXiv version~\cite{LS19} rather than the conference version~\cite{LS14}, whenever the technical details can only be found there.} show that it is sufficient to solve equations involving $A^TDA$ up to precision $\poly (1/m)$. We use this fact for our running time, and simplify our presentation by writing as if we solve such equations exactly. Similarly, we need to perform matrix-vector multiplication with the adjacency matrix and diagonal matrices only up to precision $\poly(\epsilon/(mU))$. Further we can assume all values are upper bounded by $\poly(mU/\epsilon)$. Due to the bandwidth constraint of the CONGEST model, these multiplications take $\Ot(\log(U/\epsilon))$ rounds. At the end of the computation both incident nodes to an edge know its value.

Further, we use throughout in runtime bounds that $D=\Ot(T_{\rm{Laplacian}}(G))$, which holds since $T_{\rm{Laplacian}}(G)=\Omegat(SQ(G))=\Omegat(D)$~\cite{Anagnostides0HZ22}.

\paragraph{Definitions and Set-Up}
On a high level, we perform a weighted path following interior point method. This means that throughout a number of iterations, given a current point $x^{(i)}\in \Omega^{\mathrm{o}}$, we find a point $x^{(i+1)}\in \Omega^{\mathrm{o}}$ closer to the optimal solution. To control that a point $x^{(i)}$ stays away from the boundary, we need to control $l_j\leq x_j^{(i)} \leq u_j$ for $j\in [n]$. This is done using a \emph{barrier function} $\phi_i(x_i)$, which goes to $\infty$ when $x_i$ goes to the boundary, i.e., to $l_j$ or $u_j$. The path then looks as follows:
\begin{align}
    x^{(i)} = \argmin_{A^Tx=b}\left( i\cdot c^Tx+\sum_{j\in[m]}\phi_j(x_j)\right).\label{eq:weighted_path}
\end{align}

To make this work, we need some more properties on $\phi$, leading to the definition of a self-concordant barrier function.

\begin{definition}\label{def:barrier_function}
	A convex, thrice continuously differentiable function $\phi\colon K\to \R^n$ is a $\nu$-\emph{self-concordant barrier function} for open convex set $K\subseteq \R^n$ if the following three conditions are satisfied
\begin{enumerate}
	\item $\lim_{i\to \infty} \phi(x_i) = \infty$ for all sequences $(x_i)_{i\in \N}$ with $x_i\in K$ converging to the boundary of $K$.
	\item $|\phi'''(x)[h,h,h]|\leq2|\phi''(x)[h,h]|^{3/2}$ for all $x\in K$ and $h\in\R^n$.
	\item $|\phi'(x)[h]|\leq \sqrt{\nu}|\phi''(x)[h,h]|^{1/2}$ for all $x\in K$ and $h\in\R^n$.
\end{enumerate}
\end{definition}

In our case, we choose $\phi$ as follows.
\begin{itemize}
	\item If $l_i$ is finite and $u_i=+\infty$, we use a log barrier: $\phi_i(x):=-\log(x-l_i)$.
	\item If $l_i=-\infty$ and $u_i$ is finite, we use a log barrier: $\phi_i(x):=-\log(u_i-x)$.
	\item If $l_i$ and $u_i$ are finite, we use a trigonometric barrier: $\phi_i(x):=-\log \cos(a_ix+b_i)$, where $a_i:=\tfrac{\pi}{u_i-l_i}$ and $b_i:=-\tfrac{\pi}{2}\tfrac{u_i+l_i}{u_i-l_i}$. 
\end{itemize}
This $\phi$ is a $1$-concordant barrier function~\cite{LS19}.
It can be computed internally in the CONGEST model, since we only require local knowledge of the constraints. By using this function in \autoref{eq:weighted_path}, we obtain a $\tilde O(\sqrt{m}\log(1/\epsilon))$ iteration method~\cite{renegar1988polynomial}. 

We can generalize \autoref{eq:weighted_path} to 
\begin{align}
    x^{(i)} = \argmin_{A^Tx=b}\left( i\cdot c^Tx+\sum_{j\in[m]}g_j(x)\phi_j(x_j)\right),\label{eq:weighted_path2}
\end{align}
for some \emph{weight functions} $g_j\colon \Omega^{\mathrm o}\to \R^m_{\geq 0}$. Lee and Sidford~\cite{LS14} show that using \emph{regularized Lewis weights} we only need $\tilde O(\sqrt{n}\log(1/\epsilon))$ iterations. 

To give the formal definition of the regularized Lewis weight function, we first introduce some general notation. 

\begin{itemize}
    \item For any matrix $M\in \R^{n\times n}$, we let $\diag(M)\in \R^n$ denote the diagonal of $M$, i.e., $\diag(M)_i:=M_{ii}$.
	\item For any vector $x\in \R^n$, we write upper case $X\in\R^{n\times n}$ for the diagonal matrix associated to $x$, i.e., $X_{ii}:=x_i$ and $X_{ij}:=0$ if $i\neq j$.
	\item For $x\in \Omega^{\mathrm{o}}$, we write $A_x:= (\Phi''(x))^{-1/2}A$.
	\item For $h\colon \R^n\to\R^m$ and $x\in\R^n$, we write $J_h(x)\in\R^{m\times n}$ for the Jaccobian of $h$ at $x$, i.e., $[J_h(x)]_{ij}:=\tfrac{\partial}{\partial x_j}h(x)_i$. 
	\item For positive $w\in \R^n_{>0}$, we let $||\cdot||_w$ the norm defined by $||x||_w^2 = \sum_{i\in[n]} w_ix_i^2$, and we let $||\cdot||_{w+\infty}$ the \emph{mixed norm} defined by $||x||_{w+\infty}=||x||_\infty + C_{\rm{norm}}||x||_w$ for some constant $C_{\rm{norm}}>0$ to be defined later. 
	\item Whenever we apply scalar operation to vectors, these operations are applied coordinate-wise, e.g., for $x,y\in \R^n$ we have $[x/y]_i:=x_i/y_i$, and $[x^{-1}]_i:=x_i^{-1}$.  
\end{itemize}

\begin{definition}
	A differentiable function $g\colon \Omega^{\mathrm{o}}\to \R^m_{>0}$ is a $(c_1,c_{\rm{s}},c_{\rm{k}})$-\emph{weight function} if the following bounds holds for all $x\in  \Omega^{\mathrm{o}}$ and $i\in[m]$:
\begin{itemize}
	\item size bound: $\max\{1,||g(x)||_1\}\leq c_1$;
	\item sensitivity bound: $e_i^TG(x)^{-1}A_x(A_x^TG(x)^{-1}A_x)^{-1}A_x^TG(x)^{-1}e_i\leq c_{\rm{s}}$; 
	\item consistency bound: $||G(x)^{-1}J_g(x)(\Phi''(x))^{-1/2}||_{g(x)+\infty}\leq 1-c_{\rm{k}} <1$. 
\end{itemize}
We denote $C_{\rm{norm}}:=24\sqrt{c_{\rm{s}}}c_{\rm{k}}$.
\end{definition}

Following Lee and Sidford~\cite{LS14}, we use the \emph{regularized Lewis weights}.
The $\ell_p$-Lewis weights generalize a $\ell_2$ measure of row importance called leverage scores. They are a key tool in approximating matrix $\ell_p$-norms. 

\begin{definition}
    For $M\in\R^{m\times n}$ with $\rank(M)=n$, we let $$\sigma(M) := \diag(M(M^TM)^{-1}M^T)$$ denote the \emph{leverage scores} of $M$. For all $p>0$, we define the \emph{$\ell_p$-Lewis weights} $w_p(M)$ as the unique vector $w\in R^m_{>0}$ such that $w=\sigma(W^{\tfrac{1}{2}-\tfrac{1}{p}}M)$, where $w=\diag(W)$. We define the \emph{regularized Lewis weights} as $g(x) := w_p(M_x)+c_0$, for $p=1-\tfrac{1}{\log(4m)}$ and $c_0 :=\tfrac{n}{2m}$. 
\end{definition}

The regularized Lewis weight function~$g$ is a $(c_1,c_{\rm{s}},c_{\rm{k}})$-weight function with $c_1 \leq \tfrac{3}{2}n$, $c_{\rm{s}}\leq4$, and $c_{\rm{k}}\leq2\log(4m)$ \cite{LS19}.

As said before, Lee and Sidford show that using \autoref{eq:weighted_path2} with this weight function for $\Ot(\sqrt n\log (1/\epsilon))$ iterations gives an $\epsilon$-approximate solution to our LP. 

\paragraph{Computing Leverage Scores}
Existing techniques for computing regularized Lewis weights compute leverage scores $ \sigma(M):= \diag(M(M^TM)^{-1}M^T)$ as an intermediate step.  As shown later, by repeatedly computing leverage scores, we can approximate the Lewis weights.  Unfortunately, there are no known efficient algorithms for computing leverage scores exactly. However, obtaining a sufficiently close approximation is feasible~\cite{SS11,DMMW12,Mahoney11,LMP13,Woodruff14,CLM+15}. We observe that $\sigma(M)_i = ||M(M^TM)^{-1}M^T e_i||_2^2$, and note that by the Johnson-Lindenstrauss lemma~\cite{JohnsonL1984extensions} this norm can be approximately preserved under projections onto a low dimensional subspace. In particular Achlioptas~\cite{Achlioptas03} gives an explicit (randomized) construction. 

\begin{theorem}[\cite{Achlioptas03}]\label{thm:JL}
    Let $m>0$ be an integer, let $\eta,\beta >0$ be parameters, let $k=\Omega(\beta\log m/\eta^2)$ be an integer,  and let $R\in \R^{k \times m}$ be a random matrix, where $R_{ij}=
    \pm 1/\sqrt k$, each with probability $1/2$. Then for any $x\in \R^m$ we have
    \[\P[ (1-\eta)||x||_2 \leq ||Rx||_2 \leq (1+\eta)||x||_2] \geq 1-m^{-\beta}.\]
\end{theorem}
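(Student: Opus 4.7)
The plan is to prove this via a standard two-sided concentration argument for $\|Rx\|_2^2$ around its mean. By homogeneity --- both inequalities scale linearly in $\|x\|_2$ --- I would first reduce to the unit case $\|x\|_2 = 1$, so it suffices to show that $\|Rx\|_2^2$ lies in $[(1-\eta)^2,(1+\eta)^2]$ with probability at least $1 - m^{-\beta}$. Writing $R_i$ for the $i$-th row of $R$ and letting $Z_i := \sqrt{k}\, R_i x = \sum_{j=1}^m \epsilon_{ij} x_j$ with $\epsilon_{ij}$ i.i.d.\ Rademacher signs, the $Z_i$ are i.i.d.\ with $\mathds{E}[Z_i] = 0$ and $\mathds{E}[Z_i^2] = \|x\|_2^2 = 1$, so $\|Rx\|_2^2 = \tfrac{1}{k}\sum_{i=1}^k Z_i^2$ has mean $1$ and the task reduces to a tail bound on an i.i.d.\ sum.

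The analytic heart is a moment generating function bound for $Z_i^2$. Since each $\epsilon_{ij} x_j$ lies in $[-|x_j|,|x_j|]$ with $\sum_j x_j^2 = 1$, the factorization $\mathds{E}[e^{t Z_i}] = \prod_j \cosh(t x_j) \leq \prod_j e^{t^2 x_j^2/2} = e^{t^2/2}$ shows that $Z_i$ is sub-Gaussian with proxy $1$. Standard manipulations (for example, integrating the sub-Gaussian tail bound, or a term-by-term comparison of moments with a $N(0,1)$ variable) then yield
\[
\mathds{E}[\exp(\lambda Z_i^2)] \leq (1-2\lambda)^{-1/2} \quad \text{for } \lambda \in (0,\tfrac{1}{2}),
\]
together with a symmetric bound $\mathds{E}[\exp(-\lambda Z_i^2)] \leq (1+2\lambda)^{-1/2}$ for the lower tail. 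By independence across $i$, these bounds lift to the sum $\sum_i Z_i^2$ with an exponent of $k/2$ in place of $1/2$.

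Applying Markov's inequality to $\exp(\pm\lambda \sum_i Z_i^2)$ with the optimal choice $1 - 2\lambda = (1+\eta)^{-2}$ for the upper tail (respectively $1 + 2\lambda = (1-\eta)^{-2}$ for the lower tail), and Taylor-expanding to second order for $\eta \in (0,\tfrac{1}{2})$, produces each one-sided tail bound in the form $\exp(-c\eta^2 k)$ for an absolute constant $c > 0$. Plugging in $k = \Omega(\beta \log m / \eta^2)$ with a sufficiently large hidden constant makes each tail at most $\tfrac{1}{2} m^{-\beta}$, and a union bound completes the proof. The main obstacle is the MGF bound for $Z_i^2$; once that is in place, the optimization over $\lambda$ and the two-tail union bound are routine. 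For $\eta$ outside $(0,\tfrac{1}{2})$ the claim is either vacuous or reduces to the small-$\eta$ regime, so the restriction is harmless.
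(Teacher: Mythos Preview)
The paper does not prove this theorem: it is stated with a citation to Achlioptas~\cite{Achlioptas03} and used as a black box. There is therefore no proof in the paper to compare your proposal against.

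Your outline is the standard sub-Gaussian concentration argument for the Rademacher Johnson--Lindenstrauss lemma and is essentially the approach of the cited reference. One point to tighten: the lower-tail MGF bound $\mathds{E}[\exp(-\lambda Z_i^2)] \leq (1+2\lambda)^{-1/2}$ does \emph{not} follow from sub-Gaussianity of $Z_i$ alone (sub-Gaussianity controls the upper tail of $|Z_i|$, not the probability that $Z_i^2$ is small). For Rademacher sums it does hold, but the justification requires a direct moment comparison $\mathds{E}[Z_i^{2k}] \leq \mathds{E}[G^{2k}]$ with $G\sim N(0,1)$ (this is the step Achlioptas actually carries out) rather than the $\cosh$ bound you invoke. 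With that caveat filled in, the rest of your argument---Chernoff on $\sum_i Z_i^2$, optimizing $\lambda$, and the two-sided union bound---is routine and correct.
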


Now we are ready to given an algorithm to compute $\sigma^{(\rm{apx})}$ such that $(1-\eta)\sigma(M)_i \leq \sigma^{(\rm{apx})}_i \leq (1+\eta)\sigma(M)_i$, for all $i\in[m]$.

\begin{algorithm}[H]
\SetAlgoLined \caption{\textsc{ComputeLeverageScores}($M,\eta$)}\label{alg:leverage_scores}
Set $k=\Theta(\log(m)/\eta^2)$.\\
Let $R\in \R^{k\times m}$ be a matrix where  $R_{ij}=\pm1/\sqrt{k}$, each with probability $1/2$. \\
Compute $p^{(j)}= M(M^TM)^{-1} M^T R^{(j)}$ for $j\in [k]$. \label{line:pj} \\
\Return{$\sum_{j=1}^{k}\left(p^{(j)}\right)^2$}.\label{line:returnpj}
\end{algorithm}

\begin{lemma}\label{lm:leverage_scores}
For any $\eta>0$, with probability at least $1-1/m^{O(1)}$ the CONGEST model algorithm \ComputeLeverageScores{$M,\eta$} computes $\sigma^{\rm{apx}}(M)$ such that 
\[ (1-\eta)\sigma(M)_i\leq \sigma^{\rm{apx}}(M)_i \leq (1+\eta)\sigma(M)_i\]
for all $i\in[m]$. If $M=WA$, for some diagonal $W\in \R^{m\times m}$, then it terminates in $\Ot((\log(U/\epsilon)+T_{\rm{Laplacian}}(G))/\eta^2)$~rounds. 
\end{lemma}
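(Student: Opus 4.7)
The plan is to handle correctness via the Johnson--Lindenstrauss guarantee in Theorem~\ref{thm:JL} applied to the orthogonal projection $P := M(M^TM)^{-1}M^T$, and to handle the round complexity iteration by iteration, noting that under the hypothesis $M = WA$ with $W$ diagonal we have $M^TM = A^T W^2 A$, which is exactly of the form for which Theorem~\ref{thm:LP} provides a Laplacian oracle.

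For correctness, I would first observe that $P$ is the orthogonal projection onto the column span of $M$, so $P^2 = P = P^T$, and hence $\sigma(M)_i = P_{ii} = e_i^T P^T P e_i = \|Pe_i\|_2^2$. Reading off lines~\ref{line:pj}--\ref{line:returnpj}, $p^{(j)} = P R^{(j)}$ (with $R^{(j)}$ viewed as an $m$-dimensional column vector), and therefore
\[
\sigma^{\rm{apx}}(M)_i \;=\; \sum_{j=1}^k (p^{(j)})_i^2 \;=\; \sum_{j=1}^k \bigl(e_i^T P R^{(j)}\bigr)^2 \;=\; \sum_{j=1}^k \bigl((R P e_i)_j\bigr)^2 \;=\; \|R P e_i\|_2^2,
\]
where symmetry $P = P^T$ is used in the third equality. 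Applying Theorem~\ref{thm:JL} to the single vector $x := P e_i$ with accuracy $\eta/3$ and a sufficiently large constant $\beta$, I obtain $(1-\eta/3)\|P e_i\|_2 \leq \|R P e_i\|_2 \leq (1+\eta/3)\|P e_i\|_2$ with probability at least $1 - m^{-\beta}$. Squaring gives a $(1\pm\eta)$ multiplicative approximation to $\sigma(M)_i$; a union bound over $i\in[m]$ then yields the required guarantee at every coordinate with probability $1 - 1/m^{O(1)}$.

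For round complexity, I would execute the $k = \Ot(1/\eta^2)$ iterations one after another. In iteration $j$: (a) sampling $R^{(j)}$ requires one random sign per edge, generated by one endpoint and shared with the other in $O(1)$ rounds; (b) computing $y^{(j)} := M^T R^{(j)} \in \R^n$ is entirely local at each node, since each node already knows the $M$-entries of its incident rows; (c) solving $(A^T W^2 A) z^{(j)} = y^{(j)}$ at precision $\poly(1/m)$ costs $T_{\rm{Laplacian}}(G)$ rounds by the hypothesis of Theorem~\ref{thm:LP}, with each node learning its own coordinate of $z^{(j)}$; (d) forming $p^{(j)} := M z^{(j)}$ requires each edge $e = (u,v)$ to exchange $z^{(j)}_u$ and $z^{(j)}_v$ at the working precision, costing $\Ot(\log(U/\epsilon))$ rounds. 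The final coordinate-wise sum $\sum_{j=1}^k (p^{(j)})^2$ is local. Multiplying by $k$ yields the claimed $\Ot\bigl((\log(U/\epsilon) + T_{\rm{Laplacian}}(G))/\eta^2\bigr)$ bound.

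The main obstacle is precision bookkeeping rather than anything conceptual: I would have to check that the inexact matrix-vector products and inexact Laplacian solves perturb each coordinate of $\sigma^{\rm{apx}}(M)$ only by an amount absorbable into the $(1\pm\eta)$ window supplied by JL. Since $\|P e_i\|_2 \leq 1$ and errors enter polynomially in $m$, $U$, and $1/\epsilon$, this is routine and fits within the stated $\Ot(\cdot)$ bound, mirroring the treatment in~\cite{LS19,ForsterV22}.
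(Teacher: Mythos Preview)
Your proposal is correct and follows essentially the same approach as the paper: both use the symmetry of $P = M(M^TM)^{-1}M^T$ to rewrite $\sigma^{\rm{apx}}(M)_i$ as $\|RPe_i\|_2^2$, invoke the Johnson--Lindenstrauss guarantee of Theorem~\ref{thm:JL} and a union bound for correctness, and decompose each of the $k=\Ot(1/\eta^2)$ iterations into a multiply by $M^T$, a Laplacian solve in $A^TW^2A$, and a multiply by $M$. Your treatment is in fact slightly more careful about the squaring step (using $\eta/3$ in JL) and about precision bookkeeping than the paper's own proof.
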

\begin{proof}
    \ComputeLeverageScores{$M,\eta$} returns $\sigma^{\rm{apx}}(M)_i$. Using that the matrix $M(M^TM)^{-1} M^T$ is symmetric, we obtain 
    \begin{align*}
        \sigma^{\rm{apx}}(M)_i &:= \sum_{j=1}^k (M(M^TM)^{-1} M^T R^{(j)})^2_i \\
        &= \sum_{j=1}^k (RM(M^TM)^{-1} M^T )^2_{ji}\\
        &= || R M(M^TM)^{-1} M^T e_i||_2^2.
    \end{align*}
    Since we also have $\sigma(M)_i = ||  M(M^TM)^{-1} M^T e_i||_2^2$, \autoref{thm:JL} gives us that 
    \[ (1-\eta)\sigma(M)_i\leq \sigma^{\rm{apx}}(M)_i \leq (1+\eta)\sigma(M)_i,\]
    with probability at least $1-1/m^{O(1)}$. Using a union bound, we can get the same guarantee for all $i \in [m]$ simultaneously.
    
    In the CONGEST model, we construct the required random matrix $R$ as follows. For each edge, the node with higher $\rm{ID}$ flips $k$ coins to determine the values $\pm1/\sqrt{k}$ and sends the result over the edge. This takes $O(k/\log n)=O(1/\eta^2)$ rounds. 

    For the computation in \autoref{line:pj}, we note that we can view this as $k$ times
    \begin{itemize}
        \item a matrix-vector multiplication $M^T R^{(j)}$, followed by
        \item a Laplacian system solve $(M^TM)^{-1} M^T R^{(j)}$, as $M^TM=A^TW^2A$, followed by
        \item a matrix-vector multiplication $M(M^TM)^{-1} M^T R^{(j)}$.
    \end{itemize}
    The first and last step can be done in $\Ot(\log(U/\epsilon))$ rounds, and the Laplacian solve can be done in $T_{\rm{Laplacian}}(G)$~ rounds. Finally, \autoref{line:returnpj} can be done internally. Hence we have total running time $\Ot((\log(U/\epsilon)+T_{\rm{Laplacian}}(G))/\eta^2)$.
\end{proof}

\paragraph{Computing the Weight Function}
 We continue by providing the algorithms for computing the initial weights, and for updating the weights throughout the path finding algorithm. As we use the latter for the former, we give the latter first.

\begin{algorithm}[H]
\SetAlgoLined \caption{\textsc{ComputeApxWeights}($M,p,w^{(0)},\eta$)}
$L=\max\{4,\tfrac{8}{p}\}$, $r=\tfrac{p^2(4-p)}{2^{20}}$, and $\delta=\tfrac{(4-p)\eta}{256}$.\\
$T= \left\lceil 80\left(\tfrac{p}{2}+\tfrac{2}{p}\right)\log\left(\tfrac{pn}{32\eta}\right)\right\rceil$. \\
\For{$j=1, \dots, T-1$}{
$\sigma^{(j)}=$\ComputeLeverageScores{$W^{\tfrac{1}{2}-\tfrac{1}{p}}_{(j)}M,\delta/2$}.\\
\For{$i\in [m]$}{
Let $w^{(j+1)}_i$ be the median of $(1-r)w^{(0)}_i$, $w^{(j)}_i-\tfrac{1}{L}\left(w^{(0)}_i-\tfrac{w^{(0)}_i}{w^{(j)}_i}\sigma^{(j)}_i\right)$, and $(1+r)w^{(0)}_i$.
}
}
\Return{$w^{(T)}$}.
\end{algorithm}

\begin{lemma}\label{lm:apx_wght}
    Let $W\in R^{m \times m}$ be some diagonal matrix, let $w^{(0)}\in \R_{>0}^m$ be a vector, and let $\eta\in (0,1]$ and $p\in [1-1/\log(4m),2]$ be parameters. Set $M=WA$. Then \textsc{ComputeApxWeights}($M,p,w^{(0)},\eta$) returns approximate weights in $$\Ot(\tfrac{\log(1/\eta)}{\eta^2}(\log(U/\epsilon)+T_{\rm{Laplacian(G)}}))$$ rounds.
\end{lemma}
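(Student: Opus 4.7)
The plan is to bound the per-iteration cost and multiply by $T$. The loop has $T=\lceil 80(p/2+2/p)\log(pn/(32\eta))\rceil$ iterations. Since $p\in[1-1/\log(4m),2]$, the prefactor $p/2+2/p$ is $O(1)$, and $\log(pn/(32\eta))=O(\log n + \log(1/\eta))$, so $T=\Ot(\log(1/\eta))$ under our convention that $\Ot$ hides polylogarithmic factors in $n$. Similarly $\delta=(4-p)\eta/256=\Theta(\eta)$ because $4-p\in[2,3+1/\log(4m)]$.

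The only non-local work inside an iteration is the call to \ComputeLeverageScores{$W_{(j)}^{1/2-1/p}M,\delta/2$}. I would check that this matrix has the form required by \autoref{lm:leverage_scores}: writing $M=WA$ as in the hypothesis, we have
\[
W_{(j)}^{1/2-1/p}M \;=\; \bigl(W_{(j)}^{1/2-1/p}W\bigr)\,A,
\]
and $W_{(j)}^{1/2-1/p}W$ is diagonal. Each node $i$ of the CONGEST network knows $W_{kk}$ for every incident edge $k$ (from the input assumption) and will maintain $w^{(j)}_k$ for those edges throughout the outer loop, so both diagonal factors are known locally on the relevant endpoints. Hence \autoref{lm:leverage_scores} applies with $\eta\leftarrow\delta/2=\Theta(\eta)$, giving a per-call cost of $\Ot\bigl((\log(U/\epsilon)+T_{\rm{Laplacian}}(G))/\eta^2\bigr)$ rounds.

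The subsequent weight update sets $w^{(j+1)}_i$ to the median of three scalars, each built from $w^{(0)}_i$, $w^{(j)}_i$ and $\sigma^{(j)}_i$. All three quantities are held by the endpoints of edge $i$ (the approximate leverage score $\sigma^{(j)}_i$ is produced at these endpoints by \autoref{lm:leverage_scores}, as it corresponds to the $i$-th diagonal entry), so the median and hence $w^{(j+1)}_i$ can be computed locally at no communication cost. Summing over the $T$ iterations yields a total of $\Ot\bigl(T\cdot(\log(U/\epsilon)+T_{\rm{Laplacian}}(G))/\eta^2\bigr)=\Ot\bigl(\tfrac{\log(1/\eta)}{\eta^2}(\log(U/\epsilon)+T_{\rm{Laplacian}}(G))\bigr)$ rounds.

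The main obstacle I anticipate is bookkeeping rather than mathematical: one must verify that after each iteration every node still stores $w^{(j+1)}_i$ for each of its incident edges (required for the next leverage-score call and for the next weight update), and that the factor $W_{(j)}^{1/2-1/p}$ never destroys the ``diagonal times $A$'' structure demanded by \autoref{lm:leverage_scores}. Both are straightforward, but they are the places where a CONGEST implementation could in principle fail. Correctness of the returned vector $w^{(T)}$ itself is inherited from \cite{LS19} and does not need to be re-proven here.
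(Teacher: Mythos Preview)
Your proof is correct and follows essentially the same approach as the paper's: bound $T=\Ot(\log(1/\eta))$ using the range of $p$, observe $\delta=\Theta(\eta)$, apply \autoref{lm:leverage_scores} to the per-iteration leverage-score call, and note the median update is internal. You are simply more explicit than the paper about why $W_{(j)}^{1/2-1/p}M$ retains the diagonal-times-$A$ form and about the local bookkeeping of $w^{(j)}$ at edge endpoints, which the paper leaves implicit.
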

\begin{proof}
    The algorithm consists of $T=\Ot((p+\tfrac{1}{p})\log(p/\eta))$ iterations. Using the assumption that $p\in [1-1/\log(4m),2]$, we get $T=\Ot(\log(1/\eta)$. In each iteration, we call the procedure \ComputeLeverageScores{$W^{\tfrac{1}{2}-\tfrac{1}{p}}_{(j)}M,\delta/2$} and compute some medians, the latter of which can be done internally. The call to \ComputeLeverageScores takes $\Ot((\log(U/\epsilon)+T_{\rm{Laplacian}}(G))/(\delta/2)^2)=\Ot((\log(U/\epsilon)+T_{\rm{Laplacian}}(G))/\eta^2)$ rounds, giving us the total running time as stated.
\end{proof}

For the properties and correctness of the approximate weights we refer to~\cite{LS19}.
Using the following algorithm, we compute the initial weights. We do this by iteratively bringing the all-ones vector closer to the initial weight vector. 

\begin{algorithm}[H]
\SetAlgoLined \caption{\textsc{ComputeInitialWeights}($A,p_{\rm{target}},\eta$)}
\SetKwInput{Input}{Input}
$p=2$.\\
$w=12c_{\rm{k}}\mathds{1}$.\\
\While{ $p\neq p_{\rm{target}}$\label{line:while}}{
$h=\tfrac{\min\{2,p\}}{\sqrt{n}\log\tfrac{me^2}{n}}\cdot r$.\\
Let $p^{(\rm{new})}$ be the median of $p-h$, $p_{\rm{target}}$, and $p+h$.\\
$w=$\ComputeApxWeights{$A,p^{(\rm{new})},w^{p^{(\rm{new})}/p},\tfrac{p^2(4-p)}{2^{22}}$}.\\
$p=p^{(\rm{new})}$.
}
\Return{\ComputeApxWeights{$A,p_{\rm{target}},w,\eta$}}.\label{line:return_init_w}
\end{algorithm}

\begin{lemma}\label{lm:init_wght}
    Let $\eta\in (0,1]$ and $p_{\rm{target}}\in [1-1/\log(4m),2]$ be parameters, then the CONGEST model algorithm \textsc{ComputeInitialWeights}($A,p_{\rm{target}},\eta$) returns initial weights in $\Ot((\sqrt{n}+\tfrac{\log(1/\eta)}{\eta^2}(\log(U/\epsilon)+T_{\rm{Laplacian(G)}}))$~rounds.
\end{lemma}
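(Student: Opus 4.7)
The plan is to trace the algorithm's structure: a while loop that gradually moves $p$ from $2$ to $p_{\rm target}$ via a median step, followed by a single final call to \ComputeApxWeights at precision $\eta$. Since the cost of each call to \ComputeApxWeights is already bounded by \autoref{lm:apx_wght}, essentially all I have to do is count loop iterations and sum the costs.

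First I would bound the number of iterations of the while loop. The parameter $p$ starts at $2$ and terminates at $p_{\rm target}\in[1-1/\log(4m),2]$, so the total distance to traverse is at most $1+1/\log(4m)=O(1)$. The median operation $p^{(\rm{new})}=\mathrm{med}(p-h,p_{\rm target},p+h)$ ensures $p$ moves monotonically toward $p_{\rm target}$ and never overshoots, so $p$ stays in $[1-1/\log(4m),2]$ throughout. In particular $\min\{2,p\}=p=\Theta(1)$ and $r=\Theta(p^2(4-p))=\Theta(1)$, so each step has size $h=\tilde\Theta(1/\sqrt n)$ and the loop terminates after $\tilde O(\sqrt n)$ iterations.

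Next I would bound the cost per iteration. Each iteration calls \ComputeApxWeights with precision $\eta'=\tfrac{p^2(4-p)}{2^{22}}=\Theta(1)$; by \autoref{lm:apx_wght} the round complexity is $\tilde O(\log(U/\epsilon)+T_{\rm{Laplacian}}(G))$. Combined with the iteration count, the while loop contributes $\tilde O(\sqrt n\,(\log(U/\epsilon)+T_{\rm{Laplacian}}(G)))$ rounds. The final call on \autoref{line:return_init_w} uses the target precision $\eta$ and by \autoref{lm:apx_wght} costs $\tilde O\bigl(\tfrac{\log(1/\eta)}{\eta^2}(\log(U/\epsilon)+T_{\rm{Laplacian}}(G))\bigr)$ rounds. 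Summing the two contributions yields the claimed $\tilde O\bigl((\sqrt n+\tfrac{\log(1/\eta)}{\eta^2})(\log(U/\epsilon)+T_{\rm{Laplacian}}(G))\bigr)$ bound.

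The only subtle point is verifying that the step size $h$ really is $\tilde\Omega(1/\sqrt n)$ throughout, i.e., that $p$ does not drift to a region where $\min\{2,p\}$ or $r$ collapse. Because the median keeps $p$ between its previous value and $p_{\rm target}$, and $p_{\rm target}\ge 1-1/\log(4m)$, the value of $p$ remains $\Theta(1)$, which suffices. Correctness of the weights returned is not addressed here; as stated in the excerpt, this is inherited from~\cite{LS19}.
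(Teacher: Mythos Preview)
Your proposal is correct and follows essentially the same argument as the paper: bound the while loop by $\tilde O(\sqrt n)$ iterations (the paper quotes the bound $O(\sqrt n\,(p_{\rm target}+1/p_{\rm target})\log(m/n))$ from \cite{LS19} where you instead derive it directly from the step size $h=\tilde\Theta(1/\sqrt n)$), observe that the per-iteration precision $\tfrac{p^2(4-p)}{2^{22}}$ is $\Omega(1)$ so each call to \ComputeApxWeights costs $\tilde O(\log(U/\epsilon)+T_{\rm Laplacian}(G))$ by \autoref{lm:apx_wght}, and then add the cost of the final call at precision~$\eta$. Your extra care in arguing that $p$ stays $\Theta(1)$ via the median step is a welcome clarification but does not change the structure of the proof.
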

\begin{proof}
    The while loop of \autoref{line:while} finishes in $O(\sqrt{n}(p_{\rm{target}}+\tfrac{1}{p_{\rm{target}}})\log(m/n))$ iterations. Using that $p_{\rm{target}}\in [1-1/\log(4m),2]$, this simplifies to $\Ot(\sqrt{n})$ iterations. 
    Each iteration consists of internally computing $h$ and some medians, and a call to \ComputeApxWeights. This call requires precision $\tfrac{p^2(4-p)}{2^{22}}$, which is $\Omega(1)$ for our range of $p$. So the while loop takes $\Ot(\sqrt n(\log(U/\epsilon)+T_{\rm{Laplacian}}(G)))$ rounds in total. 

    Then in \autoref{line:return_init_w} we call \ComputeApxWeights with precision $\eta$, which takes $\Ot(\tfrac{\log(1/\eta)}{\eta^2}(\log(U/\epsilon)+T_{\rm{Laplacian(G)}}))$~rounds. Together this gives the stated running time. 
\end{proof}

For the properties and correctness of the initial weights we refer to~\cite{LS19}.

\paragraph{Algorithm}
In this section, we give the formal algorithm for solving the LP, together with a series of lemmas proving the running time of each subroutine.

\begin{algorithm}[H]
\SetAlgoLined \caption{\textsc{LPSolve}($x_0,\epsilon$)} \label{alg:LPSolve}
\SetKwInput{Input}{Input}
\Input{an initial point $x_0$ such that $A^Tx_0=b$.}
$w=$\ComputeInitialWeights{$A,1-1/\log(4m),\tfrac{1}{2^{16}\log^3 m}$}$+\tfrac{n}{2m}$, $d = -w\phi'(x_0)$. \\
$t_1=(2^{27}m^{3/2}U^2\log^4 m)^{-1}$, $t_2=\tfrac{2m}{\eta}$, $\eta_1=\tfrac{1}{2^{18}\log^3 m}$, and $\eta_2 =\tfrac{\epsilon}{8U^2}$.\\
$(x^{(\rm{new})},w^{(\rm{new})}) = $\PathFollowing{$x_0,w,1,t_1,\eta_1,d$}.\\
$(x^{(\rm{final})},w^{(\rm{final})}) = $\PathFollowing{$x^{(\rm{new})},w^{(\rm{new})},t_1,t_2,\eta_2,c$}. \\
\Return{$x^{(\rm{final})}$}.
\end{algorithm}

After computing the initial weights, this algorithm calls \PathFollowing twice, first to move the given initial point towards a central starting point with respect to the cost vector $c$, and second to move the path along from there. The algorithm \PathFollowing is as follows.  

\begin{algorithm}[H]
\SetAlgoLined \caption{\textsc{PathFollowing}($x,w,t_{\rm{start}},t_{\rm{end}},\eta,c$)}\label{alg:PathFollowing}
$t=t_{\rm{start}}$, $R=\tfrac{1}{768c_{\rm{k}}^2\log(36c_1c_{\rm{s}}c_{\rm{k}} m)}$, and $\alpha=\tfrac{R}{1600\sqrt{n}\log^2 m}$.\\
\While{$t\neq t_{\rm{end}}$\label{line:while_path}}{
$(x,w) = $\CenteringInexact{$x,w,t,c$}. \\
Let $t$ be the median of  $(1-\alpha)t$, $t_{\rm{end}}$, and $(1+\alpha)t$.
}
\For{$i=1,\dots,4c_{\rm{k}}\log(\tfrac{1}{\eta})$\label{line:for_path}}{
$(x,w)= $\CenteringInexact{$x,w,t_{\rm{end}},c$}.
}
\Return{$(x,w)$}.
\end{algorithm}

The progress steps in \PathFollowing are made by \CenteringInexact, which is as follows. 

\begin{algorithm}[H]
\SetAlgoLined \caption{\textsc{CenteringInexact}($x,w,t,c$)}\label{alg:CenteringInexact}
$R=\tfrac{1}{768c_{\rm{k}}^2 \log(36c_1c_{\rm{s}}c_{\rm{k}}m)}$, and $\eta=\tfrac{1}{2c_{\rm{k}}}$.\\
$\delta = \left|\left|P_{x,w}\left(\tfrac{tc+w\phi'(x)}{w\sqrt{\phi''(x)}}\right)\right|\right|_{w+\infty}$ \tcp{where $P_{x,w}:= I-W^{-1}A_x(A_x^TW^{-1}A_x)^{-1}A_x^T$.} 
$x^{(\rm{new})}= x- \tfrac{1}{\sqrt{\phi''(x)}} P_{x,w}\left(\tfrac{tc-w\phi'(x)}{w\sqrt{\phi''(x)}}\right)$.\\
$z =  \log\left(\ComputeApxWeights{$A_{x^{(\rm{new})}},1-1/\log(4m),w,e^R-1$}\right)$.\\
$u = \left(1-\tfrac{6}{7c_{\rm{k}}}\right)\delta\cdot$\ProjectMixedBall{$-\nabla\Phi\tfrac{\eta}{12R}(z-\log(w)),C_{\rm{norm}}\sqrt{w}$}.\\
$w^{(\rm{new})}=\exp(\log(w)+u)$.\\
\Return{$\left(x^{(\rm{new})},w^{(\rm{new})}\right)$}.
\end{algorithm}
We present the subroutine \ProjectMixedBall in \autoref{sc:mixednormball}. 
We prove the running times of these three algorithms in reverse order.

\begin{lemma}\label{lm:centering}
    The CONGEST model algorithm \CenteringInexact{$x,w,t,c$} terminates in $$\Ot(\log^2(U/\epsilon)T_{\rm{Laplacian}}(G))$$ rounds. 
\end{lemma}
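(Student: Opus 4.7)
The plan is to account for the round complexity line-by-line in \textsc{CenteringInexact}. The operations fall into four categories: (i) coordinate-wise/local computations (e.g.\ forming $\phi'(x)$, $\sqrt{\phi''(x)}$, $\exp(\log(w)+u)$, scalar medians, the constants $R$ and $\eta$), which each node can do using only the knowledge it has of its incident edges, costing $O(1)$ rounds; (ii) matrix--vector products involving $A$, $A_x$, or a diagonal matrix, each costing $\tilde O(\log(U/\epsilon))$ rounds by the bandwidth/precision remark at the top of the section; (iii) Laplacian solves of the form $(A_x^T W^{-1} A_x)^{-1} y$, which by the hypothesis of \autoref{thm:LP} take $T_{\rm{Laplacian}}(G)$ rounds, with a mixed-norm aggregation $\|\cdot\|_{w+\infty}$ contributing an additional $O(D) = \tilde O(T_{\rm{Laplacian}}(G))$ rounds using $T_{\rm{Laplacian}}(G) = \Omegat(D)$; and (iv) the subroutines \textsc{ComputeApxWeights} and \textsc{ProjectMixedBall}.

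First I would unpack the computation of $\delta$ and of $x^{(\rm{new})}$. Both boil down to a single application of the projector $P_{x,w} = I - W^{-1} A_x (A_x^T W^{-1} A_x)^{-1} A_x^T$ to an explicit vector (followed, in the case of $\delta$, by a mixed-norm aggregation). Each application of $P_{x,w}$ uses $O(1)$ matrix--vector products with $A_x$ and diagonals, plus one Laplacian solve, costing $\tilde O(\log(U/\epsilon) + T_{\rm{Laplacian}}(G))$ rounds.

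Next I would bound the call $z = \log(\textsc{ComputeApxWeights}(A_{x^{(\rm{new})}},\,1-1/\log(4m),\,w,\,e^R-1))$ via \autoref{lm:apx_wght}. Since $R = 1/(768 c_{\rm{k}}^2 \log(36 c_1 c_{\rm{s}} c_{\rm{k}} m))$ and $c_{\rm{k}} \leq 2 \log(4m)$, we have $R = \Theta(1/\polylog(m))$, so $\eta := e^R - 1 = \Theta(R)$ is $\polylog^{-1}(m)$ and the factor $\log(1/\eta)/\eta^2$ appearing in \autoref{lm:apx_wght} is $\polylog(m)$, absorbed into $\tilde O$. This call therefore costs $\tilde O(\log(U/\epsilon) + T_{\rm{Laplacian}}(G))$ rounds.

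Finally, the one call to \textsc{ProjectMixedBall} is where the full $\tilde O(\log^2(U/\epsilon) T_{\rm{Laplacian}}(G))$ cost accumulates; its round complexity is provided by the accompanying lemma in \autoref{sc:mixednormball}, which may be cited. Summing the four contributions and using that all earlier steps are $\tilde O(\log(U/\epsilon) + T_{\rm{Laplacian}}(G))$ yields the claimed bound. The main obstacle here is not this bookkeeping but rather the bound on \textsc{ProjectMixedBall}: this subroutine is presumably an iterative scheme (e.g.\ binary search on a Lagrange multiplier for the mixed norm constraint) whose inner iterations each perform a Laplacian solve, and it is this nested precision dependence that produces the second factor of $\log(U/\epsilon)$ and dominates the final running time.
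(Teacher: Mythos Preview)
Your proof is correct and follows the paper's approach line by line: the application of $P_{x,w}$ and the mixed-norm aggregation cost $\tilde O(\log(U/\epsilon)+T_{\rm{Laplacian}}(G))$, the call to \textsc{ComputeApxWeights} with precision $e^R-1=\tilde\Omega(1)$ costs the same by \autoref{lm:apx_wght}, and \textsc{ProjectMixedBall} dominates via \autoref{lm:mixed_norm_ball}.

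One factual correction to your closing speculation: \textsc{ProjectMixedBall} does \emph{not} invoke a Laplacian solver. By \autoref{lm:mixed_norm_ball} it runs in $\tilde O(D\log^2(U/\epsilon))$ rounds, coming from a binary search of depth $\tilde O(\log(U/\epsilon))$ whose iterations each perform tree aggregations over the network in $\tilde O(D\log(U/\epsilon))$ rounds; the factor $T_{\rm{Laplacian}}(G)$ in the final bound enters only through the inequality $D=\tilde O(T_{\rm{Laplacian}}(G))$ (which you already used for the $\delta$-aggregation), not through any nested solve.
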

\begin{proof}
    Computing  $P_{x,w}\left(\tfrac{tc+w\phi'(x)}{w\sqrt{\phi''(x)}}\right)$ takes $\Ot(T_{\rm{Laplacian}}(G))$ rounds, using internal computation for multiplying with diagonal matrices and a Laplacian solve. To compute $\delta$ and make it known to every node, we use $\Ot(D\log(U/\epsilon))$ rounds. 

    Next, we call \ComputeApxWeights with precision $\eta=\Omegat(1)$, so this takes $\Ot(\log(U/\epsilon)+T_{\rm{Laplacian}}(G))$ rounds by \autoref{lm:apx_wght}. Finally we call the algorithm \ProjectMixedBall, which takes $$\Ot(D\log^2(U/\epsilon))=\Ot(\log^2(U/\epsilon)T_{\rm{Laplacian}}(G))$$ rounds by \autoref{lm:mixed_norm_ball}.
\end{proof}

We use this result to prove the running time of \PathFollowing.

\begin{lemma}\label{lm:pathfollowing}
    Let $t_{\rm{start}},t_{\rm{end}}\geq 1$, and $\eta\in(0,1]$ be parameters. The CONGEST model algorithm \PathFollowing{$x,w,t_{\rm{start}},t_{\rm{end}},\eta,c$} terminates in $$\Ot(\sqrt n(|\log(t_{\rm{end}}/t_{\rm{start}})|+\log(1/\eta))\log^2(U/\epsilon)T_{\rm{Laplacian}}(G))$$ rounds. 
\end{lemma}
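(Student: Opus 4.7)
The plan is to simply count iterations and multiply by the per-iteration cost given by \autoref{lm:centering}. \PathFollowing consists of two loops, each of whose iterations performs internal arithmetic (updating $t$, taking a median) together with a single call to \CenteringInexact, so by \autoref{lm:centering} every iteration of either loop costs $\Ot(\log^2(U/\epsilon)T_{\rm{Laplacian}}(G))$ rounds. It remains to bound the iteration counts.

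For the while loop, $t$ is updated by $t \leftarrow (1\pm\alpha)t$ (clipped at $t_{\rm{end}}$) where $\alpha = R/(1600\sqrt{n}\log^2 m)$ and $R = 1/(768 c_{\rm{k}}^2 \log(36 c_1 c_{\rm{s}} c_{\rm{k}} m))$. Since $c_1 \leq \tfrac{3}{2}n$, $c_{\rm{s}} \leq 4$, and $c_{\rm{k}} \leq 2\log(4m)$, we have $R = \Omegat(1)$ and hence $1/\alpha = \Ot(\sqrt{n})$. The loop terminates as soon as the multiplicative distance between $t$ and $t_{\rm{end}}$ is smaller than $\alpha$, i.e.\ after $\Ot(|\log(t_{\rm{end}}/t_{\rm{start}})|/\alpha) = \Ot(\sqrt{n}\,|\log(t_{\rm{end}}/t_{\rm{start}})|)$ iterations.

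For the for loop of \autoref{line:for_path}, the number of iterations is exactly $4c_{\rm{k}}\log(1/\eta) = \Ot(\log(1/\eta))$. Adding the two contributions yields a total of $\Ot(\sqrt{n}(|\log(t_{\rm{end}}/t_{\rm{start}})| + \log(1/\eta)))$ iterations, using $\sqrt{n}\geq 1$ to absorb the second term. Multiplying by the per-iteration cost $\Ot(\log^2(U/\epsilon)T_{\rm{Laplacian}}(G))$ gives the claimed bound.

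There is no real obstacle here: correctness of the iteration bounds is given by Lee--Sidford (the choices of $\alpha$ and the for-loop length are precisely what make their analysis go through), and the per-iteration round complexity is inherited from \autoref{lm:centering}. The only mild care needed is to verify that the parameters $R$, $c_1$, $c_{\rm{s}}$, $c_{\rm{k}}$ are all at most polylogarithmic (or polynomial in $n$ but tucked inside a log), so that $1/\alpha = \Ot(\sqrt n)$ rather than hiding an unwanted factor.
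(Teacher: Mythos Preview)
Your proposal is correct and follows essentially the same approach as the paper: bound the number of iterations of each loop and multiply by the per-iteration cost from \autoref{lm:centering}. The only difference is cosmetic—you explicitly derive $1/\alpha = \Ot(\sqrt n)$ from the parameter values, whereas the paper simply cites \cite{LS19} for the iteration bound of the while loop.
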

\begin{proof}
    First, we note that the while loop of \autoref{line:while_path} uses $\Ot(\sqrt n(|\log(t_{\rm{end}}/t_{\rm{start}})|+\log(1/\eta)))$ iterations~\cite{LS19}. Each such iteration consists of a call to \CenteringInexact and internal computations. Then the for loop of \autoref{line:for_path} takes $O(c_k\log(1/\eta))$ iterations, each consisting of a call to \CenteringInexact. Clearly this is dominated by the running time of the while loop. 

    Since \CenteringInexact takes $\Ot(\log^2(U/\epsilon)T_{\rm{Laplacian}}(G))$ rounds by \autoref{lm:centering}, we obtain the stated running time. 
\end{proof}

Finally, we give the running time of the complete algorithm.

\begin{lemma}\label{lm:LP}
    Given $\epsilon>0$, the CONGEST model algorithm \LPSolve{$x_0,\epsilon$} terminates in $\Ot(\sqrt{n}\log^3(U/\epsilon)T_{\rm{Laplacian}}(G))$ rounds. 
\end{lemma}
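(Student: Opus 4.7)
The plan is to analyze each of the three main steps of \LPSolve separately using the previously established running time bounds, and then add them up, noting that the second call to \PathFollowing dominates.

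First I would handle the call to \ComputeInitialWeights with parameter $\eta = \tfrac{1}{2^{16}\log^3 m}$. By \autoref{lm:init_wght}, this takes $\Ot((\sqrt{n}+\tfrac{\log(1/\eta)}{\eta^2}(\log(U/\epsilon)+T_{\rm{Laplacian}}(G)))$ rounds. Since $1/\eta = O(\polylog m)$, the factor $\log(1/\eta)/\eta^2$ is absorbed into the $\Ot(\cdot)$, and so this step is $\Ot(\sqrt{n}+T_{\rm{Laplacian}}(G))$ rounds, which is dominated by the target bound since $T_{\rm{Laplacian}}(G)=\Omegat(\sqrt n+D)$ is not assumed but $T_{\rm{Laplacian}}(G)\geq 1$ suffices here.

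Next I would analyze the two \PathFollowing calls via \autoref{lm:pathfollowing}. For the first call with $t_{\rm{start}}=1$, $t_{\rm{end}}=t_1=(2^{27}m^{3/2}U^2\log^4 m)^{-1}$, and $\eta=\eta_1=\tfrac{1}{2^{18}\log^3 m}$, one has $|\log(t_{\rm{end}}/t_{\rm{start}})| = O(\log(mU))$ and $\log(1/\eta_1)=O(\log\log m)$, so the running time is $\Ot(\sqrt n\,\log(mU)\log^2(U/\epsilon)T_{\rm{Laplacian}}(G))$, which is within the claimed bound. For the second call with $t_{\rm{start}}=t_1$, $t_{\rm{end}}=t_2=\Theta(m/\epsilon)$, and $\eta=\eta_2=\tfrac{\epsilon}{8U^2}$, we get $|\log(t_{\rm{end}}/t_{\rm{start}})|=O(\log(mU/\epsilon))$ and $\log(1/\eta_2)=O(\log(U/\epsilon))$. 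Plugging this into \autoref{lm:pathfollowing} yields $\Ot(\sqrt n\,\log(U/\epsilon)\cdot\log^2(U/\epsilon)\,T_{\rm{Laplacian}}(G))=\Ot(\sqrt n\,\log^3(U/\epsilon)\,T_{\rm{Laplacian}}(G))$ rounds.

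Finally I would sum the contributions. All internal computations (e.g. choosing $t_1,t_2,\eta_1,\eta_2$, and the vector $d=-w\phi'(x_0)$) are local given knowledge of $w$ and the relevant rows of $A$, so they incur no communication. Summing the three contributions gives the claimed bound $\Ot(\sqrt{n}\log^3(U/\epsilon)T_{\rm{Laplacian}}(G))$, with the second \PathFollowing call being the dominant term. There is no real obstacle here; the main thing to be careful about is confirming that the parameters $t_1,t_2,\eta_1,\eta_2$ defined in \autoref{alg:LPSolve} all contribute only $\polylog(mU/\epsilon)$ factors that are absorbed into $\Ot(\log^3(U/\epsilon))$, so that the factor $\sqrt n$ iteration count of the interior point method, each iteration costing $\Ot(\log^2(U/\epsilon)T_{\rm{Laplacian}}(G))$ rounds via \CenteringInexact (\autoref{lm:centering}), is indeed the bottleneck.
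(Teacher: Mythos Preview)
Your proposal is correct and follows essentially the same approach as the paper: analyze the three subroutine calls separately via \autoref{lm:init_wght} and \autoref{lm:pathfollowing}, plug in the specific parameters $t_1,t_2,\eta_1,\eta_2$, and observe that the second \PathFollowing call dominates. The only minor slip is in the \ComputeInitialWeights bound---the $\sqrt n$ term there is actually multiplied by $(\log(U/\epsilon)+T_{\rm{Laplacian}}(G))$ (each of the $\Ot(\sqrt n)$ while-loop iterations calls \ComputeApxWeights), giving $\Ot(\sqrt{n}(\log(U/\epsilon)+T_{\rm{Laplacian}}(G)))$ rather than $\Ot(\sqrt{n}+T_{\rm{Laplacian}}(G))$---but this does not affect the conclusion since that term is dominated anyway.
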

\begin{proof}
    Apart from some internal computation, this algorithm consists of three different parts: computing initial weight and two calls to \PathFollowing with different parameters. 

    The call to \ComputeInitialWeights takes $$\Ot(\sqrt{n}(\log(U/\epsilon)+T_{\rm{Laplacian(G)}}))$$ rounds, since we call it with precision $\tfrac{1}{2^{16} \log^3 m}$. 

    The execution of \PathFollowing{$x_0,w,1,t_1,\eta_1,d$} takes $$\Ot(\sqrt n\log(U)\log^2(U/\epsilon)T_{\rm{Laplacian}}(G))$$ rounds, by \autoref{lm:pathfollowing} and plugging in $t_1$ and $\eta_1$.

    The execution of \PathFollowing{$x^{(\rm{new})},w^{(\rm{new})},t_1,t_2,\eta_2,c$} takes $$\Ot(\sqrt n\log^3(U/\epsilon)T_{\rm{Laplacian}}(G))$$ rounds, by \autoref{lm:pathfollowing} and plugging in $t_1$, $t_2$ and $\eta_2$.

    The last running time dominates the first two and gives the stated result. 
\end{proof}

\section{Projecting on a Mixed Norm Ball}\label{sc:mixednormball}
In this section, we present a CONGEST algorithm for projecting on a mixed norm ball. This problem is defined as follows. Given $a,l\in \R^m$, find
\begin{align*}
\argmax_{||x||_2+||l^{-1}x||_\infty\leq 1} a^Tx.
\end{align*}
In the original work, Lee and Sidford~\cite{LS19} initially sort $m$ values and precompute $m$ functions on $a$ and~$l$. In the CONGEST model, these are expensive routines. We borrow ideas from Forster and de Vos~\cite{ForsterV22}, who overcame the same problem for the Broadcast Congested Clique. The rough idea is to only sort implicitly, and perform a binary search to reduce the number of functions that we have to compute to a manageable amount. We provide pseudocode in \autoref{alg:ProjectMixedBall}, with more details in the proof of \autoref{lm:mixed_norm_ball}. The pseudocode has a rather complicated binary search and some daunting equations in it. Both are probably best understood by examining the proof. 

Note that this problem has little to do with the graph structure in the CONGEST model, and as expected the algorithm actually does not make use of the graph structure other than establishing a shortest path tree for communication. 

\begin{algorithm}[H]
    \caption{\textsc{ProjectMixedBall}$(a,l)$}\label{alg:ProjectMixedBall}
    Determine the minimum value, maximum value, and step size of $\{|a_i|/l_i : i\in [m]\}$, denote this space of possible values $S$.\\
    For $s\in S$, let $i$ be the index of the value $|a_i|/l_i$ closest to $s$. \\
    Perform a binary search on $S$ w.r.t.\ $g_i$:\\
    \quad Compute $\sum_{k\in[j]}|a_k||l_k|$, $\sum_{k\in[j]} a_k^2$, and $\sum_{k\in[j]} l_k^2$ for $j\in\{i-1,i\}$.\\
    \quad Internally compute $$g_i:= \max_{t:i_t=i}t\sum_{k\in[i]}|a_k||l_k|+\sqrt{(1-t)^2-t^2\sum_{k\in[i]} l_k^2}\sqrt{||a||_2^2-\sum_{k\in[i]}a_k^2}.$$\\
    Let $t$ be the index corresponding to the maximal $g_i$. \\
    $x^{i}_j := \begin{cases} \tfrac{t}{1-t} \sign(a_j)l_j & \text{if } j\in[i] \\ \sqrt{\tfrac{1-\left(\tfrac{t}{1-t}\right)^2\sum_{k\in[i]}l_k^2}{||a||^2_2-\sum_{k\in [i]} a_k^2}a_j} & \text{otherwise. }\end{cases}$\\
    \Return{$x$}
\end{algorithm}

\newpage
\begin{lemma}\label{lm:mixed_norm_ball}
	Suppose the vectors $a,l\in \R^m$ are distributed over the network such that: 1) for each $i\in[m]$, $a_i$ and $l_i$ are known by exactly one node, 2) a node knows $a_i$ if and only if it knows $l_i$. Moreover, suppose that $||a||_\infty,||l||_\infty\leq O(\poly(m)U)$. Then the algorithm \ProjectMixedBall{$a,l$} finds 
\begin{align*}
\argmax_{||x||_2+||l^{-1}x||_\infty\leq 1} a^Tx
\end{align*}
up to precision $O(1/(\poly(mU/\epsilon))$ in $\Ot(D\log^2(U/\epsilon))$ rounds in the CONGEST model. 
\end{lemma}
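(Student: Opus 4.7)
The plan is to verify correctness of \textsc{ProjectMixedBall} and bound its CONGEST round complexity, leaning on the structural analysis from \cite{LS19,ForsterV22} and implementing the resulting binary search on a BFS tree. First I would recall the KKT characterization of the optimum. Dual analysis shows that there is a threshold $\tau\ge 0$ and a mixing scalar $t\in[0,1)$ such that coordinates with $|a_j|/l_j\ge \tau$ are saturated at the $\ell_\infty$ boundary, $x^\star_j = \tfrac{t}{1-t}\sign(a_j)\,l_j$, while the remaining coordinates are proportional to $a_j$ with the proportionality constant fixed by what remains of the $\ell_2$-budget. Enumerating coordinates in decreasing order of $|a_k|/l_k$, the problem reduces to choosing a split index $i$ and a scalar $t$; the closed-form optimum in $t$ for fixed $i$ can be read off from the aggregates $\sum_{k\in[i]}|a_k|l_k$, $\sum_{k\in[i]}a_k^2$ and $\sum_{k\in[i]}l_k^2$ and yields exactly the value $g_i$ in the pseudocode. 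The crucial ingredient I would cite from \cite{LS19} is unimodality of $g_i$ in $i$, which justifies the binary search; the expressions for $x^i_j$ in the pseudocode are then the direct substitution of the optimal $(i^\star,t^\star)$.

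Next I would implement the binary search in CONGEST on a BFS tree $T$ of depth $O(D)$ rooted at an arbitrary node. Because $\|a\|_\infty,\|l\|_\infty\le \poly(m)U$ and we only need the answer up to precision $1/\poly(mU/\epsilon)$, the discretised set $S$ of candidate thresholds has size $\poly(mU/\epsilon)$, so $\log|S|=O(\log(U/\epsilon))$. Computing the minimum, maximum and step size of $\{|a_i|/l_i\}$ to initialise $S$ is a single convergecast on $T$ costing $O(D\log(U/\epsilon))$ rounds. In each binary-search iteration the root broadcasts the current candidate threshold $s$; each node locally decides whether $|a_k|/l_k\ge s$ and contributes to the three partial sums, which are then combined along $T$. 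Each scalar is bounded by $\poly(mU/\epsilon)$, so transmitting it to the precision required by later arithmetic needs $O(\log(U/\epsilon))$ pipelined messages of $O(\log n)$ bits, giving $O(D\log(U/\epsilon))$ rounds per aggregation. The root computes $g_i$ and $g_{i-1}$ internally and decides whether to move the binary search left or right by invoking unimodality.

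After $O(\log(U/\epsilon))$ iterations the root has identified $(i^\star,t^\star)$ and broadcasts this pair (plus the normalising $\ell_2$-factor) along $T$ in $O(D\log(U/\epsilon))$ additional rounds; each node then evaluates its coordinates of $x$ locally from the closed-form expression. Summing the costs gives $O(\log(U/\epsilon))\cdot O(D\log(U/\epsilon))=\tilde O(D\log^2(U/\epsilon))$ rounds, as claimed. The main obstacle I anticipate is the unimodality step that legitimises binary search over $S$; since this is provided by \cite{LS19}, what remains is bookkeeping, in particular checking that the numerical error $1/\poly(mU/\epsilon)$ introduced at each aggregation does not accumulate beyond the allowed output precision after $O(\log(U/\epsilon))$ arithmetic stages, which follows from the polynomial boundedness of all quantities combined with standard forward error analysis. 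A secondary subtlety is the tie-breaking rule for indices $k$ with equal ratio $|a_k|/l_k$, which is handled by having the root break ties with node IDs during the convergecast so that the implicit sorted order is well defined.
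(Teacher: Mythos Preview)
Your proposal is correct and follows essentially the same approach as the paper: parameterize by $t$, use the closed-form optimizer from \cite{LS19} to reduce to a one-dimensional search over the split index, invoke concavity/unimodality of the resulting objective to justify a binary search over the values $|a_i|/l_i$, and implement each step by broadcasting the current threshold and aggregating the three partial sums along a BFS tree. The only cosmetic difference is that the paper phrases the search justification as concavity of $g(t)$ in $t$ (from which unimodality of the $g_i$ sequence follows), whereas you cite unimodality of $g_i$ directly; the round-count analysis and the handling of the implicit sort via threshold-based aggregation are identical.
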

\begin{proof}
    We rewrite the problem into maximizing over some concave function, which has a unique maximum that can be found using a binary search over the domain. We start by parameterizing the $\ell_2$-norm: 
    \begin{align*}
        \max_{||x||_2+||l^{-1}x||_\infty\leq 1} a^Tx &= \max_{0\leq t\leq 1} \left[ \max_{||x||_2\leq 1-t,\ -tl_i\leq x_i \leq tl_i} a^Tx \right]\\
        &= \max_{0\leq t\leq 1} (1-t)\left[ \max_{||x||_2\leq 1,\ -\tfrac{t}{1-t}l_i\leq x_i \leq \tfrac{t}{1-t}l_i} a^Tx \right]\\
        &= \max_{0\leq t\leq 1} g(t),
    \end{align*}
    where we define $g(t)$ as 
    \begin{align*}
        g(t) := (1-t)\left[ \max_{||x||_2\leq 1,\ -\tfrac{t}{1-t}l_i\leq x_i \leq \tfrac{t}{1-t}l_i} a^Tx \right].
    \end{align*} 
    We \emph{conceptually} sort the values of $a$ and $l$ with $|a_i|/l_i$ monotonically decreasing, i.e., we only sort them for this notation in the proof, the algorithm does not sort the values. Next, we write $i_t$ for the first coordinate $i\in [m]$ such that 
\begin{align*}
\tfrac{1-\left(\tfrac{t}{1-t}\right)^2\sum_{k\in[i_t]}l_k^2}{||a||^2_2-\sum_{k\in [i_t]} a_k^2} \leq \tfrac{\left(\tfrac{t}{1-t}\right)^2l_i^2}{a_i^2}.
\end{align*}
Now it can be shown (see e.g.~\cite{LS19}) that the vector that attains the maximum in $g(t)$ is $x^{i_t}\in \R^m$, defined by
\begin{align*}
x^{i_t}_j := \begin{cases} \tfrac{t}{1-t} \sign(a_j)l_j & \text{if } j\in[i_t] \\ \sqrt{\tfrac{1-\left(\tfrac{t}{1-t}\right)^2\sum_{k\in[i_t]}l_k^2}{||a||^2_2-\sum_{k\in [i_t]} a_k^2}a_j} & \text{otherwise. }\end{cases}
\end{align*}

We substitute this into the definition of $g(t)$: 
\begin{align*}
g(t) = t\sum_{k\in[i_t]}|a_k||l_k|+\sqrt{(1-t)^2-t^2\sum_{k\in[i_t]} l_k^2}\sqrt{||a||_2^2-\sum_{k\in[i_t]}a_k^2}.
\end{align*}
We note that $g(t)$ is a concave function (its second derivative is non-positive), hence it has a unique maximum. We find this maximum by searching over the domain. To do this, we rewrite $g$ in terms of the index $i_t$:
\begin{align*}
\max_{0\leq t\leq 1}g(t) &= \max_{0\leq t\leq 1} \max_{i\in[m]} g_i(t)\\
&=  \max_{i\in[m]} \max_{t : i_t = i} g_i(t),
\end{align*}
where 
\begin{align*}
g_i(t) := t\sum_{k\in[i]}|a_k||l_k|+\sqrt{(1-t)^2-t^2\sum_{k\in[i]} l_k^2}\sqrt{||a||_2^2-\sum_{k\in[i]}a_k^2}.
\end{align*}

Now fix a index $i$, and suppose a node knows $\sum_{k\in[j]}|a_k||l_k|$, $\sum_{k\in[j]} a_k^2$, and $\sum_{k\in[j]} l_k^2$ for $j\in\{i-1,i\}$. Then we can internally compute $g_i:=\max_{t : i_t = i} g_i(t)$, because we can internally find the range of $t$ where $i_t=i$, since we have that $i_t \geq i_s$ if $t\leq s$, hence the set of $t$ such that $i_t=j$ is an interval. 

Next, we describe how to compute the sums $\sum_{k\in[j]}$. We do this by constructing a shortest path tree of diameter $D$ from the node holding the values $a_j,l_j$. Along the tree, we aggregate the values of $|a_k||l_k|$, $a_k^2$, or $l_k^2$ respectively, for all $k\leq j$. The result can be broadcasted to all nodes without incurring extra costs. Note that if the indices are not known explicitly, the node holding $a_j$ and $l_j$, can first broadcast $|a_j|/l_j$, and then other nodes only add their values $|a_i||l_i|$ (and others respectively) if $|a_i|/l_i\leq |a_j|/l_j$. Since the values need to be maintained with precision $\poly(mU/\epsilon)$, sending one message needs at most $\Ot(\log(U/\epsilon))$ rounds, so the whole procedure takes at most $\Ot(D\log(U/\epsilon))$ rounds.

Naively, we would now be done by a simple binary search over $i\in [m]$, however we have the complication that we have only conceptually sorted the indices and hence nodes do not know which indices belong to the values they are holding. Instead we do a binary search over the possible values of $|a_i|/l_i$. Again using a communication tree from an arbitrary leader, we can find the global minimum, global maximum, and step size (least common multiple of denominators $l_i$) for the $|a_i|/l_i$ values. As not all values in the search space appear, we take the closest appearing value for a given value in the binary search. This gives a total search space of size $O(\poly(mU/\epsilon))$, so we need $\Ot(\log(U/\epsilon)$ iterations, each taking $\Ot(D\log(U/\epsilon))$ rounds.
\end{proof}

 \bibliographystyle{alpha}
 \bibliography{references}

\appendix
\section{Minimum Cost Flow}\label{app:flow}
In this section, we prove \autoref{thm:flow} by applying \autoref{thm:LP} to a suitable linear program and rounding the result to an exact solution accordingly. This particular LP formulation of minimum cost flow has first been presented by Daitch and Spielman~\cite{DS08}, and is used by Lee and Sidford~\cite{LS19}, and Forster and de Vos~\cite{ForsterV22}. As opposed to the formulation of \autoref{thm:flow}, we use $|V|$ and $|E|$ in this section to indicate the size of the node and edge set. We reserve $n$ and $m$ for the dimensions of the linear program, in line with \autoref{app:LP}. We write $M$ for the maximal edge capacity and cost.  

Let $B\in\R^{(|V|-1)\times |E|}$ be the edge-node incidence matrix with the row for the source~$s$ omitted. The variables of the LP consist of $x\in \R^{|E|}, y,z\in \R^{|V|}$ and $F\in \R$. The linear program is defined as follows. 

\begin{align*}
\min\ &\tilde{q}^T x+ \lambda(1^Ty+1^Tz)-2n\tilde{M}F\\
\text{\emph{subject to }} & Bx+y-z=Fe_t,\\
&0\leq x_i \leq c_i,\\
&0 \leq y_i\leq 4|V|M,\\
&0 \leq z_i\leq 4|V|M,\\
&0 \leq F\leq 2|V|M,\\
\end{align*}
where $\tilde{M}:=8|E|^2M^3$, $\lambda:= 28160 |E|^8 2M^9$, and $\tilde{q}=c+r$, where for each edge $r_e$ is a uniformly random number from $\left\{\tfrac{1}{4|E|^2M^2},\tfrac{2}{4|E|^2M^2}, \dots, \tfrac{2|E|M}{4|E|^2M^2}\right\}$. Daitch and Spielman~\cite{DS08} show that with probability at least $1/2$ this problem has a unique solution, which is also a valid solution to the original problem. After applying this reduction we (conceptually) scale everything by $4|E|^2M^2$ to ensure the cost vector is integral again. 

We set the variables as follows to obtain an initial interior point: $F=|V|M, x=\frac{c}{2}, y=2|V|M\mathds{1}-(B\frac{c}{2})^-+Fe_t, z=2|V|M\mathds{1}+(B\frac{c}{2})^+$, where we denote $a^+$ and $a^-$ for the vectors defined by 
\[ (a^+)_i := \begin{cases} a_i &\text{if }a_i\geq 0; \\ 0 &\text{else.}\end{cases} \hspace{3em}\text{and}\hspace{3em} (a^-)_i := \begin{cases} a_i &\text{if }a_i\leq 0 \\ 0 &\text{else.}\end{cases} \]
respectively. 

Next, we describe how we transform an $\epsilon$-approximate solution $x$ to this LP into an exact solution for the minimum cost flow problem. By introducing extra variables $y$ and $z$, we might have overshot the flow by at most $1^Ty+1^Tz\leq \epsilon$. To correct for this we set $\tilde x=(1-\epsilon)x$. We set $\epsilon:= \tfrac{1}{320|E|^4M^5}$, and then the error with respect to the unique solution is at most $1/6$~\cite{LS19}, so we have we can simply round the flow on each edge to the closest integer. Clearly both these steps can be done internally in the CONGEST model. 

To solve the above LP, we use \autoref{thm:LP} with $A=[B\ I\ -I\ -e_t]^T$. Actually, this does not use the entire network, but only $n=|V|-1$ nodes, since the source does not need to participate in the computation. Since the knowledge of the node-incident matrix $B$ is distributed as required, the knowledge of $A$ is distributed as required. The last step is to show that we can solve equations in $A^TWA$ in $T_{\rm{Laplacian}}$. This follows from~\cite{Gremban96} and is made explicit in~\cite{KOSZ13,ForsterV22}, who show that $A^TWA$ is symmetric and diagonally dominant, hence equations in $A^TWA$ can be solved by solving two Laplacian equations. We get $U/\epsilon=M\poly(|V|)$, so $\log^3 (U/\epsilon)=\Ot(\log^3 M)$. 

\end{document}